\newtheorem{theorem}{Theorem}
\newenvironment{proof}[1][Proof]{\noindent\textbf{#1.} }{\ \rule{0.5em}{0.5em}}
\newcommand{\dd}{\mathrm{d}}
\newcommand{\be}{\begin{equation}}
\newcommand{\ee}{\end{equation}}
\newcommand{\bse}{\begin{subequations}}
\newcommand{\ese}{\end{subequations}}
\newcommand{\ket}[1]{|{#1}\rangle}
\newcommand{\bra}[1]{\langle{#1}|}
\newcommand{\Z}{\mathbb{Z}}
\newcommand{\Hil}{\mathcal{H}}
\newcommand{\bpm}{\begin{pmatrix}}
\newcommand{\epm}{\end{pmatrix}}
\newcommand{\bmm}{\begin{matrix}}
\newcommand{\emm}{\end{matrix}}
\newcommand{\Blangle}{\Biggl\langle\bmm} 
\newcommand{\BRvert}{\emm\Biggr\vert} 
\newcommand{\BLvert}{\Biggl\vert\bmm} 
\newcommand{\Brangle}{\emm\Biggr\rangle}
\newcommand*{\Relbarfill@}{\arrowfill@\Relbar\Relbar\Relbar}
\newcommand*{\xeq}[2][]{\ext@arrow 0055\Relbarfill@{#1}{#2}}
\newcommand{\x}{\times}
\tikzset{->-/.style={decoration={
  markings,
  mark=at position .5 with {\arrow{>}}},postaction={decorate}}}
\tikzset{-<-/.style={decoration={
  markings,
  mark=at position .5 with {\arrow{<}}},postaction={decorate}}}
\tikzset{snake it/.style={decorate, decoration={snake,amplitude=0.15mm,segment length=1mm}}}
\tikzset{->-/.style={decoration={
			 markings,
			 mark=at position .55 with {\arrow{latex}}},postaction={decorate}}}
\newcommand{\BdryPachnerMove}{
\begin{tikzpicture}[scale=1]
\draw [] (2.4,8.) -- (3.,8.);
\draw [] (3.,8.) -- (3.6,8.);
\draw [dashed] (1.8,8.) -- (2.4,8.);
\draw [dashed] (3.6,8.) -- (4.2,8.);
\draw [] (2.4,8.) -- (3.,8.6);
\draw [] (3.,8.) -- (3.,8.6);
\draw [] (3.,8.6) -- (3.6,8.);
\draw [very thick, ->] (4.8,8.4) -- (5.6,8.4);
\draw [very thick, ->] (5.6,8.2) -- (4.8,8.2);
\draw [dashed] (6.4,8.) -- (7.,8.);
\draw [dashed] (8.2,8.) -- (8.8,8.);
\draw [] (7.,8.) -- (7.6,8.6);
\draw [] (7.6,8.6) -- (8.2,8.);
\draw [] (7.,8.) -- (8.2,8.);
\end{tikzpicture}
}
\newcommand{\ExpGSbase}{
\Biggl|\;\begin{matrix}
\begin{tikzpicture}[scale=1]
\draw [dashed] (1.6,9.6) -- (2.,9.6);
\draw [] (2.,9.6) -- (2.7,9.6);
\draw [] (2.7,9.6) -- (3.5,9.6);
\draw [dashed] (3.5,9.6) -- (4.,9.6);
\draw [->-] (2.,9.6) -- (3.,10.4);
\draw [->-] (2.7,9.6) -- (3.,10.4);
\draw [->-] (3.5,9.6) -- (3.,10.4);
\node [] at (2.3,10.2) {\scriptsize $a_{1}$};
\node [] at (3.1,9.9) {\scriptsize $a_{2}$};
\node [] at (3.5,10.2) {\scriptsize $a_{3}$};
\end{tikzpicture}\:
\end{matrix}\Biggr\rangle
}
\newcommand{\ExpGSbaseAA}{
\Biggl|\;\begin{matrix}
\begin{tikzpicture}[scale=1]
\draw [dashed] (1.6,9.6) -- (2.,9.6);
\draw [] (2.,9.6) -- (2.7,9.6);
\draw [] (2.7,9.6) -- (3.5,9.6);
\draw [dashed] (3.5,9.6) -- (4.,9.6);
\draw [->-] (2.,9.6) -- (3.,10.4);
\draw [->-] (2.7,9.6) -- (3.,10.4);
\draw [->-] (3.5,9.6) -- (3.,10.4);
\node [] at (3.1,9.9) {\scriptsize $ka_{2}$};
\node [] at (2.3,10.2) {\scriptsize $a_{1}$};
\node [] at (3.5,10.2) {\scriptsize $a_{3}$};
\end{tikzpicture}\:
\end{matrix}\Biggr\rangle
}
\newcommand{\TriangulationMuttation}{
\begin{tikzpicture}[scale=1]
\draw [] (1.8,8.) -- (1.6,8.8);
\draw [] (1.6,8.8) -- (2.2,8.6);
\draw [] (2.2,8.6) -- (3.2,8.6);
\draw [] (3.2,8.6) -- (3.2,9.4);
\draw [] (3.2,9.4) -- (2.4,9.2);
\draw [] (2.4,9.2) -- (1.6,9.6);
\draw [] (1.6,9.6) -- (2.6,10.2);
\draw [] (2.6,10.2) -- (3.6,10.);
\draw [] (3.6,10.) -- (4.,9.2);
\draw [] (4.,9.2) -- (4.,8.4);
\draw [] (4.,8.4) -- (3.4,7.8);
\draw [] (3.4,7.8) -- (2.6,7.8);
\draw [] (1.8,8.) -- (2.2,8.6);
\draw [] (2.2,8.6) -- (2.4,9.2);
\draw [] (2.4,9.2) -- (2.6,10.2);
\draw [] (1.6,9.6) -- (2.2,8.6);
\draw [] (2.2,8.6) -- (2.6,7.8);
\draw [] (2.6,7.8) -- (3.2,8.6);
\draw [] (3.2,8.6) -- (3.4,7.8);
\draw [] (1.8,8.) -- (2.6,7.8);
\draw [] (1.6,8.8) -- (1.6,9.6);
\draw [] (2.4,9.2) -- (3.2,8.6);
\draw [] (2.6,10.2) -- (3.2,9.4);
\draw [] (3.2,9.4) -- (3.6,10.);
\draw [] (3.2,9.4) -- (4.,9.2);
\draw [] (4.,9.2) -- (3.2,8.6);
\draw [] (4.,9.2) -- (3.4,7.8);
\draw [] (6.8,8.) -- (6.6,8.8);
\draw [] (6.6,9.6) -- (7.6,10.2);
\draw [] (7.6,10.2) -- (8.6,10.);
\draw [] (8.6,10.) -- (9.,9.2);
\draw [] (9.,9.2) -- (9.,8.4);
\draw [] (9.,8.4) -- (8.4,7.8);
\draw [] (8.4,7.8) -- (7.6,7.8);
\draw [] (6.8,8.) -- (7.6,7.8);
\draw [] (6.6,8.8) -- (6.6,9.6);
\draw [] (7.6,10.2) -- (7.8,9.);
\draw [] (6.6,9.6) -- (7.8,9.);
\draw [] (6.6,8.8) -- (7.8,9.);
\draw [] (7.8,9.) -- (6.8,8.);
\draw [] (7.8,9.) -- (7.6,7.8);
\draw [] (7.8,9.) -- (8.4,7.8);
\draw [] (9.,8.4) -- (7.8,9.);
\draw [] (7.8,9.) -- (9.,9.2);
\draw [] (8.6,10.) -- (7.8,9.);
\draw [->, very thick] (4.8,9.) -- (5.6,9.);
\node [] at (7.6,9.2) {\scriptsize $0$};
\node [] at (6.4,9.6) {\scriptsize ${1}$};
\node [] at (6.2,8.8) {\scriptsize ${2}$};
\node [] at (6.6,7.8) {\scriptsize ${3}$};
\node [] at (7.4,7.6) {\scriptsize ${4}$};
\node [] at (8.6,7.6) {\scriptsize ${5}$};
\node [] at (9.4,8.4) {\scriptsize ${6}$};
\node [] at (9.2,9.2) {\scriptsize ${7}$};
\node [] at (8.8,10.2) {\scriptsize ${8}$};
\node [] at (7.4,10.4) {\scriptsize ${9}$};
\end{tikzpicture}
}
\newcommand{\tetrahedronSplit}[9]{
\begin{tikzpicture}[scale=1,baseline]
\coordinate (a) at(0,0);
\coordinate (b) at (0.8,-0.75);
\coordinate (c) at (2.5,0);
\coordinate (d) at (1.25,2);
\coordinate (e) at ($ (a) ! 0.3 ! (c) $);
\coordinate (f) at ($ (a) ! 0.45 ! (c) $);
\coordinate (g) at ($ (a) ! 0.5 ! (b) $);
\coordinate (h) at ($ (c) ! 0.5 ! (d) $);
\coordinate (n) at ($ (g) ! 0.5 ! (h) $); 
\draw (a) -- (b) -- (c) -- (d) -- cycle;
\draw (b) -- (d);
\draw (a) -- (e);
\draw (f) -- (c);
\draw[-<,>=latex,line width=0.01pt] (a) -- ($ (a) ! 0.5 ! (b) $);
\draw[-<,>=latex,line width=0.01pt] (b) -- ($ (b) ! 0.5 ! (c) $);
\draw[-<,>=latex,line width=0.01pt] (c) -- ($ (c) ! 0.5 ! (d) $);
\draw[<-,>=latex,line width=0.01pt] (f) -- (c);
\draw[-<,>=latex,line width=0.01pt] (b) -- ($ (b) ! 0.5 ! (d) $);
\draw[-<,>=latex,line width=0.01pt] (a) -- ($ (a) ! 0.5 ! (d) $);

\node[left] at(a) {\scalebox{0.7}{$#1$}};
\node[below] at(b) {\scalebox{0.7}{$#2$}};
\node[right] at(c) {\scalebox{0.7}{$#3$}};
\node[above] at(d) {\scalebox{0.7}{$#4$}};
\node[circle,fill=black,outer sep=0pt, inner sep=0.6pt] at(n) {};
\node[below] at ($ (a) ! 0.5 ! (b) $) {\scalebox{0.7}{$#5$}};
\node[below] at ($ (b) ! 0.5 ! (c) $) {\scalebox{0.7}{$#6$}};
\node[right] at ($ (c) ! 0.5 ! (d) $) {\scalebox{0.7}{$#7$}};
\node at (0.6,-0.15) {\scalebox{0.7}{$#5#6$}};
\node[left] at ($ (b) ! 0.5 ! (d) $) {\scalebox{0.7}{$#6#7$}};
\node[left] at ($ (a) ! 0.5 ! (d) $) {\scalebox{0.7}{$#5#6#7$}};

\draw[dashed] (a) -- (n);
\draw[dashed] (b) -- (n);
\draw[dashed] (c) -- (n);
\draw[dashed] (d) -- (n);

\ifnum #8=1 {
  \node[right, outer sep=0pt, inner sep=1pt] at($ (n) ! 0.07 ! (d) $) {\scalebox{0.7}{$#1'$}};
  \draw[->,>=latex,line width=0.01pt] ($ (a) ! 0.4 ! (n) $) -- ($ (a) ! 0.5 ! (n) $) node[above] {\scalebox{0.7}{$#9$}};
  \draw[->,>=latex,line width=0.01pt] ($ (b) ! 0.4 ! (n) $) node[right] {\scalebox{0.7}{$#9#5$}} -- ($ (b) ! 0.5 ! (n) $) ;
  \draw[->,>=latex,line width=0.01pt] ($ (c) ! 0.4 ! (n) $) -- ($ (c) ! 0.5 ! (n) $) node[above] {\scalebox{0.7}{$#9#5#6$}};
  \draw[->,>=latex,line width=0.01pt] ($ (d) ! 0.4 ! (n) $) -- ($ (d) ! 0.5 ! (n) $) node[right,outer sep=0pt, inner sep=1pt] {\scalebox{0.6}{$#9#5#6#7$}};}
\fi
\ifnum #8=2 {
  \node[right, outer sep=0pt, inner sep=1pt] at($ (n) ! 0.07 ! (d) $) {\scalebox{0.7}{$#2'$}};
  \draw[-<,>=latex,line width=0.01pt] ($ (a) ! 0.4 ! (n) $) -- ($ (a) ! 0.5 ! (n) $) node[above, outer sep=0pt, inner sep=1pt] {\scalebox{0.7}{$#5#9^{-1}$}};
  \draw[->,>=latex,line width=0.01pt] ($ (b) ! 0.4 ! (n) $) node[right] {\scalebox{0.7}{$#9$}} -- ($ (b) ! 0.5 ! (n) $) ;
  \draw[->,>=latex,line width=0.01pt] ($ (c) ! 0.4 ! (n) $) -- ($ (c) ! 0.5 ! (n) $) node[above] {\scalebox{0.7}{$#9#6$}};
  \draw[->,>=latex,line width=0.01pt] ($ (d) ! 0.4 ! (n) $) -- ($ (d) ! 0.5 ! (n) $) node[right,outer sep=0pt, inner sep=1pt] {\scalebox{0.6}{$#9#6#7$}};}
\fi
\ifnum #8=3 {
  \node[right, outer sep=0pt, inner sep=1pt] at($ (n) ! 0.07 ! (d) $) {\scalebox{0.7}{$#3'$}};
  \draw[-<,>=latex,line width=0.01pt] ($ (a) ! 0.4 ! (n) $) -- ($ (a) ! 0.5 ! (n) $) node[above, outer sep=0pt, inner sep=1pt] {\scalebox{0.7}{$#5#6#9^{-1}$}};
  \draw[-<,>=latex,line width=0.01pt] ($ (b) ! 0.4 ! (n) $) node[right, outer sep=0pt, inner sep=1pt] {\scalebox{0.7}{$#6#9^{-1}$}} -- ($ (b) ! 0.5 ! (n) $) ;
  \draw[->,>=latex,line width=0.01pt] ($ (c) ! 0.4 ! (n) $) -- ($ (c) ! 0.5 ! (n) $) node[above] {\scalebox{0.7}{$#9$}};
  \draw[->,>=latex,line width=0.01pt] ($ (d) ! 0.4 ! (n) $) -- ($ (d) ! 0.5 ! (n) $) node[right,outer sep=0pt, inner sep=1pt] {\scalebox{0.6}{$#9#7$}};}
\fi
\ifnum #8=4 {
  \node[right, outer sep=0pt, inner sep=1pt] at($ (n) ! 0.07 ! (d) $) {\scalebox{0.7}{$#4'$}};
  \draw[-<,>=latex,line width=0.01pt] ($ (a) ! 0.4 ! (n) $) -- ($ (a) ! 0.5 ! (n) $) node[above, outer sep=0pt, inner sep=1pt] {\scalebox{0.7}{$#5#6#7#9^{-1}$}};
  \draw[-<,>=latex,line width=0.01pt] ($ (b) ! 0.4 ! (n) $) node[right, outer sep=0pt, inner sep=1pt] {\scalebox{0.7}{$#6#7#9^{-1}$}} -- ($ (b) ! 0.5 ! (n) $) ;
  \draw[-<,>=latex,line width=0.01pt] ($ (c) ! 0.4 ! (n) $) -- ($ (c) ! 0.5 ! (n) $) node[above] {\scalebox{0.7}{$#7#9^{-1}$}};
  \draw[->,>=latex,line width=0.01pt] ($ (d) ! 0.4 ! (n) $) -- ($ (d) ! 0.5 ! (n) $) node[right,outer sep=0pt, inner sep=1pt] {\scalebox{0.6}{$#9$}};}
\fi
\end{tikzpicture}
}
\newcommand{\oneTriangle}[4][0]{
\begin{tikzpicture}[scale=1,baseline]
\draw (0,0) node[left] {\scalebox{0.7}{$#2$}} -- (1,0) node[right] {\scalebox{0.7}{$#3$}} -- (60:1) node[above] {\scalebox{0.7}{$#4$}} -- cycle;
\ifnum #1=1 {
   \draw[-<,>=latex,line width=0.01pt] (0,0) -- (0.5,0) node[below] {\scalebox{0.7}{$[#2#3]$}};
   \draw[-<,>=latex,line width=0.01pt] (0,0) -- (60:0.5) node[left] {\scalebox{0.7}{$[#3#4]$}};
   \draw[->,>=latex,line width=0.01pt] (0,0) +(60:1) -- +(30:0.866) node[right] {\scalebox{0.7}{$[#2#4]$}};}
\else {
   \draw[-<,>=latex,line width=0.01pt] (0,0) -- (0.5,0) ;
   \draw[-<,>=latex,line width=0.01pt] (0,0) -- (60:0.5) ;
   \draw[->,>=latex,line width=0.01pt] (0,0) +(60:1) -- +(30:0.866);}  \fi
\end{tikzpicture}
}
\newcommand{\twoTriangles}[5]{
\begin{tikzpicture}[scale=0.8,baseline]
\coordinate (b) at (0,0);
\coordinate (c) at (30:1);
\coordinate (d) at (0,1);
\coordinate (a) at (150:1);

\draw (a) node[left] {\scalebox{0.7}{$#1$}} -- (b) node[below] {\scalebox{0.7}{$#2$}} -- (c) node[right] {\scalebox{0.7}{$#3$}} -- (d) node[above] {\scalebox{0.7}{$#4$}} -- cycle;
\draw[-<,>=latex,line width=0.01pt] (a) -- ($ (a) ! 0.5 ! (b) $);
\draw[-<,>=latex,line width=0.01pt] (b) -- ($ (b) ! 0.5 ! (c) $);
\draw[-<,>=latex,line width=0.01pt] (c) -- ($ (c) ! 0.5 ! (d) $);
\draw[-<,>=latex,line width=0.01pt] (a) -- ($ (a) ! 0.5 ! (d) $);

\ifnum #5=0 {
   \draw (b)  -- (d);
   \draw[-<,>=latex,line width=0.01pt] (b) -- ($ (b) ! 0.5 ! (d) $);}
\else {
   \draw (a)  -- (c);
   \draw[-<,>=latex,line width=0.01pt] (a) -- ($ (a) ! 0.5 ! (c) $);}
\fi
\end{tikzpicture}
}
\newcommand{\threeTriangles}[5]{
\begin{tikzpicture}[scale=0.8,baseline]
\coordinate (c) at (0,0);
\coordinate (d) at (0,1);
\coordinate (a) at (210:1);
\coordinate (b) at (-30:1);

\draw (a) node[left] {\scalebox{0.7}{$#1$}} -- (b) node[right] {\scalebox{0.7}{$#2$}} -- (c) node[below] {\scalebox{0.7}{$#3$}} -- (d) node[above] {\scalebox{0.7}{$#4$}} -- (a) -- (c);
\draw (b) -- (d);
\draw[-<,>=latex,line width=0.01pt] (a) -- ($ (a) ! 0.5 ! (b) $);
\draw[-<,>=latex,line width=0.01pt] (a) -- ($ (a) ! 0.5 ! (d) $);
\draw[-<,>=latex,line width=0.01pt] (b) -- ($ (b) ! 0.5 ! (d) $);

\ifnum #5=1 {
\draw[-<,>=latex,line width=0.01pt] (b) -- ($ (b) ! 0.5 ! (c) $);
\draw[->,>=latex,line width=0.01pt] (c) -- ($ (c) ! 0.5 ! (d) $);
\draw[-<,>=latex,line width=0.01pt] (a) -- ($ (a) ! 0.5 ! (c) $);}
\fi
\ifnum #5=2 {
\draw[-<,>=latex,line width=0.01pt] (b) -- ($ (b) ! 0.5 ! (c) $);
\draw[-<,>=latex,line width=0.01pt] (c) -- ($ (c) ! 0.5 ! (d) $);
\draw[-<,>=latex,line width=0.01pt] (a) -- ($ (a) ! 0.5 ! (c) $);}
\fi
\ifnum #5=3 {
\draw[->,>=latex,line width=0.01pt] (b) -- ($ (b) ! 0.5 ! (c) $);
\draw[-<,>=latex,line width=0.01pt] (c) -- ($ (c) ! 0.5 ! (d) $);
\draw[-<,>=latex,line width=0.01pt] (a) -- ($ (a) ! 0.5 ! (c) $);}
\fi
\ifnum #5=4 {
\draw[->,>=latex,line width=0.01pt] (b) -- ($ (b) ! 0.5 ! (c) $);
\draw[-<,>=latex,line width=0.01pt] (c) -- ($ (c) ! 0.5 ! (d) $);
\draw[->,>=latex,line width=0.01pt] (a) -- ($ (a) ! 0.5 ! (c) $);}
\fi
\end{tikzpicture}
}
\newcommand{\BvTri}[5]{
\begin{tikzpicture}[scale=1,baseline]
\coordinate (a) at(0,0);    
\coordinate (b) at (0.8,-0.75);  
\coordinate (c) at (2.5,0);      
\coordinate (d) at (1.25,2);     
\coordinate (e) at ($ (a) ! 0.3 ! (c) $);
\coordinate (f) at ($ (a) ! 0.45 ! (c) $);
\coordinate (g) at ($ (a) ! 0.5 ! (b) $);
\coordinate (h) at ($ (c) ! 0.5 ! (d) $);
\coordinate (n) at ($ (g) ! 0.5 ! (h) $); 
\draw (a) -- (b) -- (c) -- (d) -- cycle;
\draw (b) -- (d);
\draw (a) -- (e);
\draw (f) -- (c);
\draw[-<,>=latex,line width=0.01pt] (a) -- ($ (a) ! 0.5 ! (b) $);
\draw[-<,>=latex,line width=0.01pt] (b) -- ($ (b) ! 0.5 ! (c) $);
\draw[-<,>=latex,line width=0.01pt] (c) -- ($ (c) ! 0.5 ! (d) $);
\draw[<-,>=latex,line width=0.01pt] (f) -- (c);
\draw[-<,>=latex,line width=0.01pt] (b) -- ($ (b) ! 0.5 ! (d) $);
\draw[-<,>=latex,line width=0.01pt] (a) -- ($ (a) ! 0.5 ! (d) $);

\node[left] at(a) {\scalebox{0.7}{$#1$}};
\node[below] at(b) {\scalebox{0.7}{$#2$}};
\node[right] at(c) {\scalebox{0.7}{$#4$}};
\node[above] at(d) {\scalebox{0.7}{$#3$}};
\node[circle,fill=black,outer sep=0pt, inner sep=0.6pt] at(n) {};
\draw[dashed] (a) -- (n);
\draw[dashed] (b) -- (n);
\draw[dashed] (c) -- (n);
\draw[dashed] (d) -- (n);

\ifnum #5=1 {
  \node[right, outer sep=0pt, inner sep=1pt] at($ (n) ! 0.07 ! (d) $) {\scalebox{0.7}{$#1'$}};
  \draw[->,>=latex,line width=0.01pt] ($ (a) ! 0.4 ! (n) $) -- ($ (a) ! 0.5 ! (n) $) ;
  \draw[->,>=latex,line width=0.01pt] ($ (b) ! 0.4 ! (n) $) -- ($ (b) ! 0.5 ! (n) $) ;
  \draw[->,>=latex,line width=0.01pt] ($ (c) ! 0.4 ! (n) $) -- ($ (c) ! 0.5 ! (n) $) ;
  \draw[->,>=latex,line width=0.01pt] ($ (d) ! 0.4 ! (n) $) -- ($ (d) ! 0.5 ! (n) $) ;}
\fi
\ifnum #5=2 {
  \node[right, outer sep=0pt, inner sep=1pt] at($ (n) ! 0.07 ! (d) $) {\scalebox{0.7}{$#2'$}};
  \draw[-<,>=latex,line width=0.01pt] ($ (a) ! 0.4 ! (n) $) -- ($ (a) ! 0.5 ! (n) $) ;
  \draw[->,>=latex,line width=0.01pt] ($ (b) ! 0.4 ! (n) $) -- ($ (b) ! 0.5 ! (n) $) ;
  \draw[->,>=latex,line width=0.01pt] ($ (c) ! 0.4 ! (n) $) -- ($ (c) ! 0.5 ! (n) $) ;
  \draw[->,>=latex,line width=0.01pt] ($ (d) ! 0.4 ! (n) $) -- ($ (d) ! 0.5 ! (n) $) ;}
\fi
\ifnum #5=3 {
  \node[right, outer sep=0pt, inner sep=1pt] at($ (n) ! 0.07 ! (d) $) {\scalebox{0.7}{$#3'$}};
  \draw[-<,>=latex,line width=0.01pt] ($ (a) ! 0.4 ! (n) $) -- ($ (a) ! 0.5 ! (n) $) ;
  \draw[-<,>=latex,line width=0.01pt] ($ (b) ! 0.4 ! (n) $) -- ($ (b) ! 0.5 ! (n) $) ;
  \draw[->,>=latex,line width=0.01pt] ($ (c) ! 0.4 ! (n) $) -- ($ (c) ! 0.5 ! (n) $) ;
  \draw[->,>=latex,line width=0.01pt] ($ (d) ! 0.4 ! (n) $) -- ($ (d) ! 0.5 ! (n) $) ;}
\fi
\ifnum #5=4 {
  \node[right, outer sep=0pt, inner sep=1pt] at($ (n) ! 0.07 ! (d) $) {\scalebox{0.7}{$#4'$}};
  \draw[-<,>=latex,line width=0.01pt] ($ (a) ! 0.4 ! (n) $) -- ($ (a) ! 0.5 ! (n) $) ;
  \draw[-<,>=latex,line width=0.01pt] ($ (b) ! 0.4 ! (n) $) -- ($ (b) ! 0.5 ! (n) $) ;
  \draw[->,>=latex,line width=0.01pt] ($ (c) ! 0.4 ! (n) $) -- ($ (c) ! 0.5 ! (n) $) ;
  \draw[-<,>=latex,line width=0.01pt] ($ (d) ! 0.4 ! (n) $) -- ($ (d) ! 0.5 ! (n) $) ;}
\fi
\end{tikzpicture}
}
\newcommand{\cylinderGraph}[5]{
\begin{tikzpicture}[scale=#5,baseline]
\coordinate (p1) at (0,0);
\coordinate (p2) at (0,1);
\coordinate (p3) at (1,0);
\coordinate (p4) at (1,1);
\draw [-,>=latex,line width=0.02pt] (p2) -- (p1);
\draw [-,>=latex,line width=0.02pt] (p4) -- (p3);
\draw [-<-,>=latex,line width=0.02pt] (p3) -- (p1);
\draw [-<-,>=latex,line width=0.02pt] (p4) -- (p2);
\draw [-,>=latex,line width=0.02pt] (p4) -- (p1);
\node at (-0.15,-0.15) {\scalebox{0.7}{1}};
\node at (-0.15,1.15) {\scalebox{0.7}{2}};
\node at (1.15,-0.15) {\scalebox{0.7}{3}};
\node at (1.15,1.15) {\scalebox{0.7}{4}};
\node at (0.5,-0.18) {\scalebox{0.7}{$#1$}};
\node at (0.5,1.18) {\scalebox{0.7}{$#2$}};
\node at (-0.15,0.5) {\scalebox{0.7}{$#3$}};
\node at (1.2,0.5) {\scalebox{0.7}{$#4$}};
\end{tikzpicture}
}
\newcommand{\torusgraphST}[9]{
\begin{tikzpicture}[scale=1,baseline]
\coordinate (p1) at (0,0);
\coordinate (p2) at (0,1);
\coordinate (p3) at (1,0);
\coordinate (p4) at (1,1);
\draw [->-,>=latex,line width=0.02pt] (p2) -- (p1);
\draw [->-,>=latex,line width=0.02pt] (p4) -- (p3);
\draw [->-,>=latex,line width=0.02pt] (p3) -- (p1);
\draw [->-,>=latex,line width=0.02pt] (p4) -- (p2);

\ifnum #9=0 {
    \ifnum #1<#7
       \draw [->-,>=latex,line width=0.02pt] (p4) -- (p1);
    \else   \draw [-<-,>=latex,line width=0.02pt] (p4) -- (p1);
    \fi }
\else {
    \ifnum #3<#5
       \draw [->-,>=latex,line width=0.02pt] (p3) -- (p2);
    \else   \draw [-<-,>=latex,line width=0.02pt] (p3) -- (p2);
    \fi }
\fi
\node at (-0.15,-0.15) {$#1#2$};
\node at (-0.15,1.15) {$#3#4$};
\node at (1.15,-0.15) {$#5#6$};
\node at (1.15,1.15) {$#7#8$};
\end{tikzpicture}
}
\newcommand{\Ygraph}[4][1]{
\begin{tikzpicture}[scale=0.6,baseline]
  \draw [->,>=stealth',line width=0.01pt] (30:0.1) -- (0,0) ; 
    \draw (30:1) -- (0,0) ; 
  \draw [->,>=stealth',line width=0.01pt] (150:0.1) -- (0,0); 
    \draw (150:1) -- (0,0); 
  \node at(-0.5,0.01) {\scalebox{0.8}{$#2$}};
  \node at(0.5,0.01) {\scalebox{0.8}{$#4$}};
  \ifnum #1=1 {
   \draw [->,>=stealth',line width=0.01pt] (0,-1/20) -- (0,0); 
   \node at(-0.2,-0.5) {\scalebox{0.8}{$#3$}};
  }
  \else{
    \draw [<-,>=stealth',line width=0.01pt] (0,-1/2) -- (0,0); 
    \node at(-0.4,-0.5) {\scalebox{0.8}{$-#3$}};
    }
  \fi
    \draw (0,-1) -- (0,0); 
  \end{tikzpicture}
  }
\newcommand{\Hgraph}[3][1]{
  \begin{tikzpicture}[scale=0.6,baseline]
    \coordinate (c) at (0,0);
    \coordinate (l) at (-0.7, 0);
    \coordinate (r) at ($ (c) ! -1 ! (l) $);
    \coordinate (ul) at (-0.95,0.75);
    \coordinate (lr) at ($ (c) ! -1 ! (ul) $);
    \coordinate (ll) at (-0.95,-0.75);
    \coordinate (ur) at ($ (c) ! -1 ! (ll) $);
    \draw (l) -- (r) ; 
    \draw (ul) -- (l); 
    \draw (ll) -- (l); 
    \draw (lr) -- (r); 
    \draw (ur) -- (r); 
    \ifnum #1=1 {
    \node[right] at($ (ul) ! .3 ! (l) $) {\scalebox{0.7}{$#2_1$}};
    \node[right] at($ (ll) ! .2 ! (l) $) {\scalebox{0.7}{$#2_2$}};
    \node[left] at($ (lr) ! .2 ! (r) $) {\scalebox{0.7}{$#2_3$}};
    \node[left] at($ (ur) ! .3 ! (r) $) {\scalebox{0.7}{$#2_4$}};
    \node[below] at($ (c) ! .15 ! (0,0.5) $) {\scalebox{0.7}{$#3$}};}
    \fi
    \draw[<-,>=stealth', line width=0.01pt] (l) -- (r) ;
    \draw[->,>=stealth', line width=0.01pt] (ul) -- (l);
    \draw[->,>=stealth', line width=0.01pt] (ll) -- (l);
    \draw[->,>=stealth', line width=0.01pt] (lr) -- (r);
    \draw[->,>=stealth', line width=0.01pt] (ur) -- (r);
   \end{tikzpicture}
  }
 \newcommand{\Xgraph}[3][1]{
  \begin{tikzpicture}[scale=0.6,baseline]
    \coordinate (c) at (0,0);
    \coordinate (u) at (0, 0.6);
    \coordinate (d) at ($ (c) ! -1 ! (u) $);
    \coordinate (ul) at (-0.85,0.85);
    \coordinate (lr) at ($ (c) ! -1 ! (ul) $);
    \coordinate (ll) at (-0.85,-0.85);
    \coordinate (ur) at ($ (c) ! -1 ! (ll) $);
    \draw (u) -- (d) ; 
    \draw (ul) -- (u); 
    \draw (ll) -- (d); 
    \draw (lr) -- (d); 
    \draw (ur) -- (u); 
    \ifnum #1=1 {
    \node[below] at($ (ul) ! .3 ! (u) $) {\scalebox{0.7}{$#2_1$}};
    \node[above] at($ (ll) ! .2 ! (d) $) {\scalebox{0.7}{$#2_2$}};
    \node[above] at($ (lr) ! .2 ! (d) $) {\scalebox{0.7}{$#2_3$}};
    \node[below] at($ (ur) ! .2 ! (u) $) {\scalebox{0.7}{$#2_4$}};
    \node[right] at($ (-0.5,0) ! 0.8 ! (c) $) {\scalebox{0.7}{$#3$}}; }
    \fi
    \draw[<-,>=stealth', line width=0.01pt] (u) -- (d) ;
    \draw[->,>=stealth', line width=0.01pt] (ul) -- (u);
    \draw[->,>=stealth', line width=0.01pt] (ll) -- (d);
    \draw[->,>=stealth', line width=0.01pt] (lr) -- (d);
    \draw[->,>=stealth', line width=0.01pt] (ur) -- (u);
   \end{tikzpicture}
  }
\newcommand{\Psix}[3][1]{
\begin{tikzpicture}[scale=0.8]
\node[name=s, regular polygon, regular polygon sides=6, minimum size=1cm, outer sep=0pt ,draw] at (0,0) {}; 
%
\foreach \anchor/\x/\y /\xx/\yy /\b in
{corner 1/0.17/0.17*1.732/-0.11/0.18/1, corner 2/-0.17/0.17*1.732/0.07/0.18/2, corner 3/-0.34/0/-0.15/-0.18/3, corner 4/-0.17/-0.17*1.732/-0.22/-0.05/4, corner 5/0.17/-0.17*1.732/0.2/-0.05/5, corner 6/0.34/0/0.15/-0.18/6}
{
 \draw[shift=(s.\anchor)] (0,0) -- (\x,\y) node at(\xx,\yy) {$#2_{\text{\scalebox{0.7}{$\b$}}}$};
 \ifnum #1=1
 \draw[shift=(s.\anchor),<-,>=stealth', line width=0.01pt] (s.\anchor) -- (\x,\y);
 \fi
 }
%
\foreach \anchor/\xx/\yy /\a in
{side 1/0/-0.18/1, side 2/-0.18/0.05/2, side 3/0.15/0.05/3, side 4/0/-0.18/4, side 5/-0.18/0.05/5, side 6/0.15/0.05/6}
 \draw[shift=(s.\anchor)]  node at(\xx,\yy) {$#3_{\text{\scalebox{0.7}{$\a$}}}$};
\ifnum #1=1{
  \foreach \anchorr/\anchorf in
   {corner 1/corner 2, corner 2/corner 3, corner 3/corner 4, corner 4/corner 5, corner 5/corner 6, corner 6/corner 1}
   \draw[shift=(s.\anchorr), ->, >=stealth', line width=0.01pt]  (s.\anchorr) -- (s.\anchorf);}
 \else {
  \foreach \anchorb/\anchorw in
   {corner 1/corner 2, corner 3/corner 4, corner 5/corner 6} {
   \node[fill=black, circle, minimum size=2.5, inner sep=0, outer sep=0, draw] at(s.\anchorb) {};
   \node[fill=white, circle, minimum size=2.5, inner sep=0, outer sep=0, draw] at(s.\anchorw) {};}
}
\fi
\end{tikzpicture}
}
\newcommand{\torusHole}{
\begin{tikzpicture}[scale=0.6]
\draw [] (2.8,11.2) -- (3.2,11.);
\draw [] (3.2,11.) -- (3.,11.6);
\draw [] (3.,11.6) -- (3.2,12.);
\draw [] (3.2,12.) -- (3.6,12.);
\draw [] (3.6,12.) -- (3.8,11.6);
\draw [] (3.8,11.6) -- (3.6,11.2);
\draw [] (2.8,11.2) -- (3.,11.6);
\draw [] (3.6,12.) -- (3.8,12.4);
\draw [] (3.8,12.4) -- (4.,12.);
\draw [] (3.6,12.) -- (4.,12.);
\draw [] (3.8,11.6) -- (4.4,11.6);
\draw [] (4.,12.) -- (3.8,11.6);
\draw [] (4.,12.) -- (4.4,11.6);
\draw [] (4.4,11.6) -- (4.,11.2);
\draw [] (4.,11.2) -- (3.8,11.6);
\draw [] (3.6,11.2) -- (4.,11.2);
\draw [] (3.2,11.) -- (3.6,11.2);
\draw [] (2.4,11.6) -- (2.8,11.2);
\draw [] (2.8,11.2) -- (3.2,10.6);
\draw [] (3.2,11.) -- (3.2,10.6);
\draw [] (3.8,12.4) -- (4.2,12.4);
\draw [] (4.,12.) -- (4.4,12.);
\draw [] (4.,11.2) -- (4.2,10.8);
\draw [] (4.,11.2) -- (4.4,11.2);
\draw [] (2.6,10.8) -- (2.8,11.2);
\draw [] (2.8,11.2) -- (2.4,11.2);
\draw [] (3.8,12.4) -- (4.,12.8);
\draw [] (3.2,10.6) -- (2.8,10.4);
\draw [very thick] (3.2,12.) -- (3.,11.6);
\draw [very thick] (3.,11.6) -- (3.2,11.);
\draw [very thick] (3.2,11.) -- (3.6,11.2);
\draw [very thick] (3.6,11.2) -- (3.8,11.6);
\draw [very thick] (3.8,11.6) -- (3.6,12.);
\draw [very thick] (3.2,12.) -- (3.6,12.);
\draw [very thick] (11.,9.4) -- (10.6,9.2);
\draw [very thick] (10.6,9.2) -- (10.4,9.6);
\draw [very thick] (10.4,9.6) -- (10.6,10.);
\draw [very thick] (10.6,10.) -- (11.2,9.8);
\draw [very thick] (11.2,9.8) -- (11.,9.4);
\draw [] (10.6,10.) -- (10.8,10.4);
\draw [] (10.8,10.4) -- (11.2,9.8);
\draw [] (11.2,9.8) -- (11.6,9.8);
\draw [] (11.6,9.8) -- (11.,9.4);
\draw [] (11.,9.4) -- (11.,9.);
\draw [] (11.,9.) -- (10.6,9.2);
\draw [] (10.6,9.2) -- (10.2,9.);
\draw [] (10.2,9.) -- (10.4,9.6);
\draw [] (10.4,9.6) -- (10.,10.);
\draw [] (10.,10.) -- (10.6,10.);
\draw [] (10.2,9.) -- (10.,10.);
\draw [] (10.,10.) -- (10.8,10.4);
\draw [] (10.8,10.4) -- (11.6,9.8);
\draw [] (11.6,9.8) -- (11.,9.);
\draw [] (11.,9.) -- (10.2,9.);
\draw [] (10.8,10.4) -- (11.2,10.6);
\draw [] (10.,10.) -- (10.,10.6);
\draw [] (10.,10.) -- (9.6,9.8);
\draw [] (9.6,9.) -- (10.2,9.);
\draw [] (10.4,8.6) -- (11.,9.);
\draw [] (11.4,9.) -- (11.6,9.8);
\draw [] (11.6,9.8) -- (11.8,10.2);
\draw [] (2.4,11.6) -- (2.6,12.2);
\draw [] (2.4,11.6) -- (3.,11.6);
\draw [] (3.2,12.) -- (2.6,12.2);
\draw [] (3.2,12.) -- (3.2,12.6);
\draw [] (3.6,12.) -- (3.2,12.6);
\draw [] (3.2,12.6) -- (3.8,12.4);
\draw [] (2.4,11.6) -- (3.2,12.);
\draw [] (2.6,12.2) -- (3.2,12.6);
\draw [] (3.2,12.6) -- (3.6,13.);
\draw [] (3.2,10.6) -- (3.6,11.2);
\draw [] (3.6,11.2) -- (3.8,10.8);
\draw [] (3.2,10.6) -- (3.8,10.8);
\draw [] (3.8,10.8) -- (3.6,10.4);
\draw [] (4.,11.2) -- (3.8,10.8);
\draw [] (3.8,10.8) -- (4.,10.6);
\draw [very thick] (6.2,8.8) -- (6.,8.4);
\draw [very thick] (6.,8.4) -- (6.6,8.);
\draw [very thick] (6.6,8.) -- (6.8,8.6);
\draw [very thick] (6.8,8.6) -- (6.2,8.8);
\draw [] (6.,8.4) -- (5.6,8.6);
\draw [] (5.6,8.6) -- (5.8,9.);
\draw [] (5.8,9.) -- (6.2,8.8);
\draw [] (6.2,8.8) -- (6.6,9.2);
\draw [] (6.6,9.2) -- (7.2,8.6);
\draw [] (7.2,8.6) -- (6.8,8.6);
\draw [] (7.2,8.6) -- (6.8,7.8);
\draw [] (6.6,8.) -- (6.8,7.8);
\draw [] (6.8,7.8) -- (6.,7.8);
\draw [] (6.,7.8) -- (6.6,8.);
\draw [] (6.,8.4) -- (6.,7.8);
\draw [] (6.,7.8) -- (5.6,8.);
\draw [] (5.6,8.) -- (6.,8.4);
\draw [] (6.6,9.2) -- (6.8,8.6);
\draw [] (5.6,8.6) -- (5.,8.4);
\draw [] (6.6,9.2) -- (6.2,9.2);
\draw [] (7.2,8.6) -- (7.4,8.8);
\draw [] (6.8,7.8) -- (7.2,7.8);
\draw [] (6.,7.8) -- (6.2,7.6);
\draw [] (6.6,8.) -- (7.2,8.6);
\draw [] (5.6,8.) -- (5.6,8.6);
\draw [] (5.6,8.) -- (5.4,7.6);
\draw [] (6.2,8.8) -- (6.2,9.2);
\draw [] (6.2,9.2) -- (5.8,9.4);
\draw [] (6.6,9.2) -- (6.4,9.6);
\draw [] (2.6,12.2) -- (2.4,12.6);
\draw [] (6.,8.4) -- (5.8,9.);
\draw [] (5.8,9.) -- (5.4,9.2);
\draw [gray] (1.2,12.6) ..controls (0.757,11.844) and (0.953,10.66) .. (1.8,9.6)
                         ..controls (2.647,8.54) and (4.145,7.604) .. (5.6,7.)
                         ..controls (7.055,6.396) and (8.467,6.124) .. (9.8,6.2)
                         ..controls (11.133,6.276) and (12.388,6.699) .. (13.,7.4)
                         ..controls (13.612,8.101) and (13.58,9.078) .. (13.,10.)
                         ..controls (12.42,10.922) and (11.29,11.787) .. (10.2,12.4)
                         ..controls (9.11,13.013) and (8.059,13.374) .. (7.,13.6)
                         ..controls (5.941,13.826) and (4.875,13.918) .. (3.8,13.8)
                         ..controls (2.725,13.682) and (1.643,13.356) .. (1.2,12.6);
\draw [gray] (5.2,11.4) ..controls (5.176,11.224) and (5.152,11.048) .. (5.4,10.8)
                         ..controls (5.648,10.552) and (6.167,10.233) .. (6.8,10.)
                         ..controls (7.433,9.767) and (8.181,9.619) .. (8.6,9.6)
                         ..controls (9.019,9.581) and (9.11,9.69) .. (9.2,9.8);
\draw [gray] (5.4,10.8) ..controls (5.631,10.933) and (5.862,11.067) .. (6.4,11.)
                         ..controls (6.938,10.933) and (7.782,10.667) .. (8.2,10.4)
                         ..controls (8.618,10.133) and (8.609,9.867) .. (8.6,9.6);
\node [] at (3.385,11.6) {\scalebox{0.7}{ $\partial_1\Gamma$}};
\node [] at (6.37,8.42) {\scalebox{0.7}{ $\partial_2\Gamma$}};
\node [] at (10.8,9.6) {\scalebox{0.7} {$\partial_3\Gamma$}};
\end{tikzpicture}
}
\newcommand{\bPlaquette}[6]{
\begin{tikzpicture}[scale=1]
\draw [thick, very thick] (1.6,12.) -- (2.8,12.);
\draw [thick, very thick, dashed] (0.8,12.) -- (1.6,12.);
\draw [thick, very thick, dashed] (2.8,12.) -- (3.6,12.);
\draw [dotted] (1.6,12.) -- (2.2,11.2);
\draw [dotted] (2.2,11.2) -- (2.8,12.);
\draw [thick] (1.6,12.) -- (2.2,12.8);
\draw [thick] (2.2,12.8) -- (2.8,12.);
\node [] at (1.6,11.8) {\scriptsize $#1$};
\node [] at (2.8,11.8) {\scriptsize $#2$};
\node [] at (2.4,12.8) {\scriptsize $#3$};
\node [] at (3.2,12.4) {\scriptsize $#4$};
\node [right] at (3.6,12.) {\scriptsize $#5$};
\node [] at (2.2,11.8) {\scriptsize $#6$};
\end{tikzpicture}
}
\newcommand{\bPlaquetteTwo}[4]{
\begin{tikzpicture}[scale=1]
\draw [thick] (1.6,12.8) -- (2.8,12.8);
\draw [thick, thin] (2.4,13.8) -- (1.6,12.8);
\draw [thick, thin] (2.4,13.8) -- (2.8,12.8);
\draw [thick, thin] (2.4,13.8) -- (2.2,12.8);
\draw [thick, dashed] (3.2,12.8) -- (2.8,12.8);
\draw [thick, dashed] (1.2,12.8) -- (1.6,12.8);
\node [] at (2.6,13.8) {\scriptsize $#1$};
\node [] at (1.6,12.6) {\scriptsize $#2$};
\node [] at (2.2,12.6) {\scriptsize $#3$};
\node [] at (2.8,12.6) {\scriptsize $#4$};
\node [] at (2.9,13.4) {\scriptsize $bulk$};
\end{tikzpicture}
}
\newcommand{\AvAmplitude}[5]{
\begin{tikzpicture}[scale=1]
\draw [thick, dashed] (0.8,12.) -- (1.2,12.);
\draw [thick, dashed] (2.4,12.) -- (2.8,12.);
\draw [thin] (1.2,12.) -- (2.,13.);
\draw [thin] (2.,13.) -- (2.4,12.);
\draw [thick] (2.,11.4) -- (2.4,12.);
\draw [thin] (2.,13.) -- (2.,12.1);
\draw [thin] (2.,11.4) -- (2.,11.9);
\draw [thick] (1.2,12.) -- (2.,11.4);
\draw [thick] (1.2,12.) -- (2.4,12.);
\draw [thin] (2.,13.) -- (1.8,12.);
\draw [thin] (1.8,12.) -- (2.,11.4);
\path [gray, fill, opacity=0.5] (2.,13.) -- (1.8,12.) -- (2.,11.4) -- cycle;
\path [gray, fill, opacity=0.5] (1.2,12.) -- (2.,11.4) -- (1.8,12.) -- cycle;
\path [gray, fill, opacity=0.5] (1.8,12.) -- (2.4,12.) -- (2.,11.4) -- cycle;
\node [] at (2.2,11.4) {\scriptsize $#5$};
\node [] at (1.2,12.2) {\scriptsize $#1$};
\node [] at (1.7,12.2) {\scriptsize $#2$};
\node [] at (2.5,12.2) {\scriptsize $#3$};
\node [] at (2.2,13.) {\scriptsize $#4$};
\end{tikzpicture}
}
\newcommand{\AvCommute}{
\begin{tikzpicture}[scale=1]

\path [gray, fill, opacity=0.5] (5.,13.8) -- (4.8,12.8) -- (5.,12.2) -- cycle;
\path [gray, fill, opacity=0.5] (4.2,12.8) -- (5.,12.2) -- (4.8,12.8) -- cycle;
\path [gray, fill, opacity=0.5] (4.8,12.8) -- (5.4,12.8) -- (5.,12.2) -- cycle;
\path [gray, fill, opacity=0.5] (7.4,11.8) -- (7.2,10.8) -- (7.4,10.2) -- cycle;
\path [gray, fill, opacity=0.5] (6.6,10.8) -- (7.4,10.2) -- (7.2,10.8) -- cycle;
\path [gray, fill, opacity=0.5] (7.2,10.8) -- (7.8,10.8) -- (7.4,10.2) -- cycle;
\draw [thick, dashed] (3.8,12.8) -- (4.2,12.8);
\draw [thick, dashed] (5.4,12.8) -- (5.8,12.8);
\draw [thin] (4.2,12.8) -- (5.,13.8);
\draw [thin] (5.,13.8) -- (5.4,12.8);
\draw [thick] (5.,12.2) -- (5.4,12.8);
\draw [thin] (5.,13.8) -- (5.,12.9);
\draw [thin] (5.,12.2) -- (5.,12.7);
\draw [thick] (4.2,12.8) -- (5.,12.2);
\draw [thick] (4.2,12.8) -- (5.4,12.8);
\draw [thin] (5.,13.8) -- (4.8,12.8);
\draw [thin] (4.8,12.8) -- (5.,12.2);
\draw [thick, dashed] (1.4,12.8) -- (1.8,12.8);
\draw [thick, dashed] (3.,12.8) -- (3.4,12.8);
\draw [thin] (1.8,12.8) -- (2.6,13.8);
\draw [thin] (2.6,13.8) -- (3.,12.8);
\draw [thick] (1.8,12.8) -- (3.,12.8);
\draw [thin] (2.6,13.8) -- (2.4,12.8);
\draw [thick, dashed] (6.2,12.8) -- (6.6,12.8);
\draw [thick, dashed] (7.8,12.8) -- (8.2,12.8);
\draw [thin] (6.6,12.8) -- (7.4,13.8);
\draw [thin] (7.4,13.8) -- (7.8,12.8);
\draw [thin] (7.4,13.8) -- (7.2,12.8);
\draw [thick] (6.6,12.8) -- (6.6,12.2);
\draw [thick] (6.6,12.2) -- (7.2,12.8);
\draw [thin] (6.6,12.2) -- (7.4,13.8);
\draw [thick] (7.2,12.8) -- (7.8,12.8);
\draw [thick, dashed] (6.2,10.8) -- (6.6,10.8);
\draw [thick, dashed] (7.8,10.8) -- (8.2,10.8);
\draw [thin] (6.6,10.8) -- (7.4,11.8);
\draw [thin] (7.4,11.8) -- (7.8,10.8);
\draw [thick] (7.4,10.2) -- (7.8,10.8);
\draw [thin] (7.4,11.8) -- (7.4,10.9);
\draw [thin] (7.4,10.2) -- (7.4,10.7);
\draw [thick] (6.6,10.8) -- (7.4,10.2);
\draw [thick] (6.6,10.8) -- (7.8,10.8);
\draw [thin] (7.4,11.8) -- (7.2,10.8);
\draw [thin] (7.2,10.8) -- (7.4,10.2);
\draw [thick, dashed] (1.4,10.8) -- (1.8,10.8);
\draw [thick, dashed] (3.,10.8) -- (3.4,10.8);
\draw [thin] (1.8,10.8) -- (2.6,11.8);
\draw [thin] (2.6,11.8) -- (3.,10.8);
\draw [thick] (1.8,10.8) -- (3.,10.8);
\draw [thin] (2.6,11.8) -- (2.4,10.8);
\draw [thick, dashed] (5.4,10.8) -- (5.8,10.8);
\draw [thin] (4.2,10.8) -- (5.,11.8);
\draw [thin] (5.,11.8) -- (5.4,10.8);
\draw [thin] (5.,11.8) -- (4.8,10.8);
\draw [thick] (4.2,10.2) -- (4.8,10.8);
\draw [->, thick] (3.4,13.4) -- (4.,13.4);
\draw [->, thick] (5.8,13.4) -- (6.4,13.4);
\draw [->, thick] (3.4,11.4) -- (4.,11.4);
\draw [->, thick] (5.8,11.4) -- (6.4,11.4);
\draw [thick] (4.2,10.2) -- (4.2,10.8);
\draw [dashed, thick] (3.8,10.8) -- (4.2,10.8);
\draw [thin] (4.2,10.2) -- (5.,11.8);
\draw [thick] (4.8,10.8) -- (5.4,10.8);
\node [] at (1.8,12.6) {\scriptsize $2$};
\node [] at (2.4,12.6) {\scriptsize $1$};
\node [] at (3.,12.6) {\scriptsize $3$};
\node [] at (2.8,13.8) {\scriptsize $0$};
\node [] at (4.2,12.6) {\scriptsize $2$};
\node [] at (5.4,12.6) {\scriptsize $3$};
\node [] at (4.7,12.9) {\scriptsize $1$};
\node [] at (5.2,12.2) {\scriptsize $1'$};
\node [] at (5.2,13.8) {\scriptsize $0$};
\node [] at (7.8,12.6) {\scriptsize $3$};
\node [] at (7.6,13.8) {\scriptsize $0$};
\node [] at (6.5,12.9) {\scriptsize $2$};
\node [] at (6.8,12.2) {\scriptsize $2'$};
\node [] at (7.2,12.6) {\scriptsize $1'$};
\node [] at (1.8,10.6) {\scriptsize $2$};
\node [] at (2.4,10.6) {\scriptsize $1$};
\node [] at (3.,10.6) {\scriptsize $3$};
\node [] at (2.8,11.8) {\scriptsize $0$};
\node [] at (7.8,10.6) {\scriptsize $3$};
\node [] at (7.1,10.9) {\scriptsize $1$};
\node [] at (7.6,10.2) {\scriptsize $1'$};
\node [] at (7.6,11.8) {\scriptsize $0$};
\node [] at (5.4,10.6) {\scriptsize $3$};
\node [] at (5.2,11.8) {\scriptsize $0$};
\node [] at (4.1,10.9) {\scriptsize $2$};
\node [] at (4.4,10.2) {\scriptsize $2'$};
\node [] at (4.8,10.6) {\scriptsize $1$};
\node [] at (6.6,10.6) {\scriptsize $2'$};
\node [] at (3.7,11.6) {\scriptsize $A^k_2$};
\node [] at (6.1,11.6) {\scriptsize $A^{k'}_1$};
\node [] at (3.7,13.6) {\scriptsize $A^{k'}_1$};
\node [] at (6.1,13.6) {\scriptsize $A^k_2$};
\path [white, fill] (6.8,12.8) -- (6.8,12.7) -- (6.9,12.7) -- (6.9,12.8) -- cycle;
\path [white, fill] (7.,12.8) -- (7.,12.9) -- (6.9,12.9) -- (6.9,12.8) -- cycle;
\path [fill, opacity=0.3] (6.6,12.2) -- (7.4,13.8) -- (7.2,12.8) -- cycle;
\path [white, fill] (4.4,10.8) -- (4.4,10.7) -- (4.5,10.7) -- (4.5,10.8) -- cycle;
\path [white, fill] (4.6,10.8) -- (4.6,10.9) -- (4.5,10.9) -- (4.5,10.8) -- cycle;
\path [fill, opacity=0.3] (4.2,10.2) -- (5.,11.8) -- (4.8,10.8) -- cycle;
\path [fill, opacity=0.3] (6.6,12.8) -- (6.6,12.2) -- (7.2,12.8) -- cycle;
\path [fill, opacity=0.3] (4.2,10.8) -- (4.2,10.2) -- (4.8,10.8) -- cycle;
\draw [thick] (6.6,12.8) -- (7.2,12.8);
\draw [thick] (4.2,10.8) -- (4.8,10.8);
\end{tikzpicture}
}
\newcommand{\Frobenius}{
\begin{tikzpicture}[scale=1]

\path [gray, fill, opacity=0.3] (3.2,13.6) -- (3.2,12.4) -- (4.4,12.4) -- cycle;
\path [lightgray, fill, opacity=0.5] (3.2,13.6) -- (4.4,12.4) -- (4.4,13.6) -- cycle;
\draw [thick] (3.2,12.4) -- (3.2,13.6);
\draw [thick] (3.2,13.6) -- (4.4,13.6);
\draw [thick] (4.4,13.6) -- (4.4,12.4);
\draw [thick] (4.4,12.4) -- (3.2,12.4);
\draw [thick] (3.2,13.6) -- (4.4,12.4);
\draw [thick] (3.2,12.4) -- (3.7,12.9);
\draw [thick] (3.9,13.1) -- (4.4,13.6);
\node [] at (3.,13.6) {\scriptsize $2$};
\node [] at (4.6,13.6) {\scriptsize $1$};
\node [] at (4.6,12.4) {\scriptsize $1'$};
\node [] at (3.,12.4) {\scriptsize $2'$};

\end{tikzpicture}
}
\newcommand{\cylinderGSD}{
\begin{tikzpicture}[scale=1]

\path [gray, fill, opacity=0.5, opacity=0.3] (2.4,10.4) -- (3.6,11.4) -- (6.,11.4) -- cycle;
\path [gray, fill, opacity=0.5, opacity=0.3] (2.4,10.4) -- (4.8,10.4) -- (6.,11.4) -- cycle;
\draw [thick] (2.4,11.8) -- (3.6,12.8);
\draw [thick] (6.,11.4) -- (4.8,10.4);
\draw [thick] (6.,12.8) -- (4.8,11.8);
\draw [thick] (2.4,10.4) -- (3.307,11.156);
\draw [thick] (3.461,11.284) -- (3.6,11.4);
\draw [thin] (2.4,10.4) -- (4.8,10.4);
\draw [thin] (2.4,11.8) -- (2.4,10.4);
\draw [thin] (2.4,11.8) -- (3.6,11.4);
\draw [thin] (2.4,11.8) -- (4.8,10.4);
\draw [thin] (2.4,11.8) -- (4.8,11.8);
\draw [thin] (2.4,11.8) -- (6.,12.8);
\draw [thin] (3.6,12.8) -- (6.,12.8);
\draw [thin] (4.8,11.8) -- (4.8,10.4);
\draw [thin] (4.8,11.8) -- (6.,11.4);
\draw [thin] (6.,12.8) -- (6.,11.4);
\draw [thin] (2.4,11.8) -- (4.677,11.547);
\draw [thin] (4.875,11.525) -- (6.,11.4);
\draw [thin] (3.6,12.8) -- (4.282,12.402);
\draw [thin] (4.454,12.302) -- (4.906,12.038);
\draw [thin] (5.078,11.938) -- (6.,11.4);
\draw [thin] (2.4,10.4) -- (3.924,10.823);
\draw [thin] (4.116,10.877) -- (4.68,11.033);
\draw [thin] (4.872,11.087) -- (6.,11.4);
\draw [thin] (3.6,12.8) -- (3.6,12.2);
\draw [thin] (3.6,12.) -- (3.6,11.836);
\draw [thin] (3.6,11.636) -- (3.6,11.4);
\draw [thin] (3.6,11.4) -- (4.7,11.4);
\draw [thin] (4.9,11.4) -- (6.,11.4);
\node [] at (2.3,11.9) {\scriptsize $1$};
\node [] at (3.4,12.8) {\scriptsize $2$};
\node [] at (4.7,11.9) {\scriptsize $3$};
\node [] at (6.2,12.8) {\scriptsize $4$};
\node [] at (6.2,11.4) {\scriptsize $4'$};
\node [] at (5.,10.4) {\scriptsize $3'$};
\node [] at (3.7,11.2) {\scriptsize $2'$};
\node [] at (2.3,10.4) {\scriptsize $1'$};
\node [] at (4.7,12.7) {\scriptsize $g$};
\node [] at (4.2,12.) {\scriptsize $g$};
\node [] at (4.,10.6) {\scriptsize $g'$};
\node [] at (2.8,12.4) {\scriptsize $k_1$};
\node [] at (5.5,12.1) {\scriptsize $k_2$};
\node [] at (5.6,10.8) {\scriptsize $k'_2$};
\node [] at (4.5,11.3) {\scriptsize $g$};
\node [] at (3.,11.1) {\scriptsize $k'_1$};
\node [] at (2.3,11.2) {\scriptsize $x$};
\node [] at (3.7,12.5) {\scriptsize $x$};
\node [] at (6.1,12.1) {\scriptsize $y$};
\node [] at (4.9,10.8) {\scriptsize $y$};
\path [gray, fill, opacity=0.5, opacity=0.3] (2.4,11.8) -- (3.6,12.8) -- (6.,12.8) -- cycle;
\path [gray, fill, opacity=0.5, opacity=0.3] (2.4,11.8) -- (4.8,11.8) -- (6.,12.8) -- cycle;
\end{tikzpicture}
}
\newcommand{\cylinderGSDsimple}{
\begin{tikzpicture}[scale=1]
\path [gray, fill, opacity=0.5, opacity=0.3] (1.7,8.) -- (2.9,9.) -- (5.3,9.) -- cycle;
\path [gray, fill, opacity=0.5, opacity=0.3] (1.7,8.) -- (4.1,8.) -- (5.3,9.) -- cycle;
\draw [] (1.7,9.4) -- (4.1,9.4);
\draw [] (4.1,8.) -- (5.3,10.4);
\draw [] (1.7,9.4) -- (4.013,9.143);
\draw [] (4.211,9.121) -- (4.517,9.087);
\draw [] (4.715,9.065) -- (5.3,9.);
\draw [] (2.9,10.4) -- (3.582,10.002);
\draw [] (3.754,9.902) -- (4.23,9.624);
\draw [] (4.402,9.524) -- (4.662,9.372);
\draw [] (4.834,9.272) -- (5.3,9.);
\draw [] (1.7,8.) -- (2.195,8.991);
\draw [] (2.285,9.169) -- (2.339,9.279);
\draw [] (2.429,9.457) -- (2.471,9.543);
\draw [] (2.561,9.721) -- (2.9,10.4);
\draw [thick] (1.7,9.4) -- (2.9,10.4);
\draw [thick] (5.3,9.) -- (4.1,8.);
\draw [thick] (5.3,10.4) -- (4.1,9.4);
\draw [thick] (1.7,8.) -- (2.607,8.756);
\draw [thick] (2.761,8.884) -- (2.9,9.);
\draw [thin] (1.7,8.) -- (4.1,8.);
\draw [thin] (1.7,9.4) -- (1.7,8.);
\draw [thin] (1.7,9.4) -- (4.1,8.);
\draw [thin] (1.7,9.4) -- (5.3,10.4);
\draw [thin] (2.9,10.4) -- (5.3,10.4);
\draw [thin] (4.1,9.4) -- (4.1,8.);
\draw [thin] (5.3,10.4) -- (5.3,9.);
\draw [thin] (2.9,9.) -- (4.,9.);
\draw [thin] (4.2,9.) -- (5.3,9.);
\draw [thin] (2.9,10.4) -- (2.9,9.8);
\draw [thin] (2.9,9.6) -- (2.9,9.436);
\draw [thin] (2.9,9.236) -- (2.9,9.);
\draw [thin] (1.7,8.) -- (3.224,8.423);
\draw [thin] (3.416,8.477) -- (3.98,8.633);
\draw [thin] (4.172,8.687) -- (4.376,8.743);
\draw [thin] (4.568,8.797) -- (5.3,9.);
\node [] at (1.6,9.5) {\scriptsize $1$};
\node [] at (1.6,8.) {\scriptsize $1'$};
\node [] at (2.7,10.4) {\scriptsize $2$};
\node [] at (2.8,9.1) {\scriptsize $2'$};
\node [] at (4.,9.5) {\scriptsize $3$};
\node [] at (4.3,8.) {\scriptsize $3'$};
\node [] at (5.5,10.4) {\scriptsize $4$};
\node [] at (5.5,9.) {\scriptsize $4'$};
\node [] at (3.8,8.9) {\scriptsize $g$};
\node [] at (4.,10.3) {\scriptsize $g$};
\node [] at (2.1,10.) {\scriptsize $k_1$};
\node [] at (4.9,8.4) {\scriptsize $k_2$};
\node [] at (1.6,8.8) {\scriptsize $x$};
\node [] at (2.8,9.9) {\scriptsize $x$};
\node [] at (5.4,9.7) {\scriptsize $y$};
\node [] at (5.1,9.5) {\scriptsize $yk_2$};
\node [] at (3.2,8.7) {\scriptsize $g\bar y$};
\node [] at (2.9,7.8) {\scriptsize $xg\bar y$};
\node [] at (2.3,8.7) {\scriptsize $k_1$};
\path [gray, fill, opacity=0.5, opacity=0.3] (1.7,9.4) -- (2.9,10.4) -- (5.3,10.4) -- cycle;
\path [gray, fill, opacity=0.5, opacity=0.3] (1.7,9.4) -- (4.1,9.4) -- (5.3,10.4) -- cycle;
\end{tikzpicture}
}
\newcommand{\cylinderGSDsplit}{
\begin{tikzpicture}[scale=1]
\path [gray, fill, opacity=0.5, opacity=0.3] (1.7,8.) -- (2.9,9.) -- (5.3,9.) -- cycle;
\path [gray, fill, opacity=0.5, opacity=0.3] (1.7,8.) -- (4.1,8.) -- (5.3,9.) -- cycle;
\draw [] (1.7,9.4) -- (5.3,9.);
\draw [] (2.9,10.4) -- (5.3,9.);
\draw [] (5.2,9.4) -- (7.6,9.4);
\draw [] (7.6,8.) -- (8.8,10.4);
\draw [] (1.7,8.) -- (2.195,8.991);
\draw [] (2.285,9.169) -- (2.327,9.255);
\draw [] (2.417,9.433) -- (2.9,10.4);
\draw [] (5.2,9.4) -- (7.513,9.143);
\draw [] (7.711,9.121) -- (8.017,9.087);
\draw [] (8.215,9.065) -- (8.8,9.);
\draw [] (6.4,10.4) -- (7.082,10.002);
\draw [] (7.254,9.902) -- (7.73,9.624);
\draw [] (7.902,9.524) -- (8.162,9.372);
\draw [] (8.334,9.272) -- (8.8,9.);
\draw [thick] (1.7,9.4) -- (2.9,10.4);
\draw [thick] (5.3,9.) -- (4.1,8.);
\draw [thick] (5.2,9.4) -- (6.4,10.4);
\draw [thick] (8.8,9.) -- (7.6,8.);
\draw [thick] (8.8,10.4) -- (7.6,9.4);
\draw [thick] (1.7,8.) -- (2.607,8.756);
\draw [thick] (2.761,8.884) -- (2.9,9.);
\draw [thin] (1.7,8.) -- (4.1,8.);
\draw [thin] (1.7,9.4) -- (1.7,8.);
\draw [thin] (1.7,9.4) -- (4.1,8.);
\draw [thin] (2.9,9.) -- (5.3,9.);
\draw [thin] (5.2,9.4) -- (7.6,8.);
\draw [thin] (5.2,9.4) -- (8.8,10.4);
\draw [thin] (6.4,10.4) -- (8.8,10.4);
\draw [thin] (7.6,9.4) -- (7.6,8.);
\draw [thin] (8.8,10.4) -- (8.8,9.);
\draw [thin] (2.9,10.4) -- (2.9,9.394);
\draw [thin] (2.9,9.194) -- (2.9,9.);
\draw [thin] (1.7,8.) -- (3.224,8.423);
\draw [thin] (3.416,8.477) -- (5.3,9.);
\node [] at (1.6,9.5) {\scriptsize $1$};
\node [] at (1.6,8.) {\scriptsize $1'$};
\node [] at (2.7,10.4) {\scriptsize $2$};
\node [] at (2.8,9.1) {\scriptsize $2'$};
\node [] at (4.3,8.) {\scriptsize $3'$};
\node [] at (5.3,8.8) {\scriptsize $4'$};
\node [] at (2.1,10.) {\scriptsize $k_1$};
\node [] at (2.3,8.7) {\scriptsize $k_1$};
\node [] at (4.9,8.4) {\scriptsize $k_2$};
\node [] at (1.6,8.8) {\scriptsize $x$};
\node [] at (3.,9.7) {\scriptsize $x$};
\node [] at (2.9,7.8) {\scriptsize $xg\bar y$};
\node [] at (3.2,8.7) {\scriptsize $g\bar y$};
\node [] at (5.1,9.5) {\scriptsize $1$};
\node [] at (6.2,10.4) {\scriptsize $2$};
\node [] at (7.5,9.5) {\scriptsize $3$};
\node [] at (7.8,8.) {\scriptsize $3'$};
\node [] at (9.,10.4) {\scriptsize $4$};
\node [] at (9.,9.) {\scriptsize $4'$};
\node [] at (6.7,9.5) {\scriptsize $g$};
\node [] at (7.5,10.3) {\scriptsize $g$};
\node [] at (5.6,10.) {\scriptsize $k_1$};
\node [] at (8.4,8.4) {\scriptsize $k_2$};
\node [] at (8.9,9.7) {\scriptsize $y$};
\node [] at (8.6,9.5) {\scriptsize $yk_2$};
\node [] at (6.7,8.7) {\scriptsize $g\bar y$};
\node [] at (4.1,9.9) {\scriptsize $g\bar y$};
\node [] at (4.,8.9) {\scriptsize $xg\bar y$};
\node [] at (3.,7.4) {\scriptsize $\frac{\beta(g y \bar g,gk_2 \bar g)\beta_{gk_2\bar g}(gy\bar g,g\bar y)}{\beta(gk_2 \bar g,gy\bar g)}$};
\node [] at (7.,7.4) {\scriptsize $\frac{\beta(k_2,y)}{ \beta_{gk_2\bar g}(g\bar y, y)\beta(y,k_2)}$};
\path [gray, fill, opacity=0.5, opacity=0.3] (5.2,9.4) -- (6.4,10.4) -- (8.8,10.4) -- cycle;
\path [gray, fill, opacity=0.5, opacity=0.3] (5.2,9.4) -- (7.6,9.4) -- (8.8,10.4) -- cycle;
\end{tikzpicture}
}
\begin{document}

\title{Twisted Quantum Double Model of Topological Orders with Boundaries}
\author{Alex Bullivant}
\affiliation{School of Mathematics, University of Leeds, Leeds, LS2 9JT, United Kingdom}
\author{Yuting Hu}
\email{yuting.phys@gmail.com}
\affiliation{Department of Physics and Center for Field Theory and Particle Physics, Fudan University, Shanghai 200433, China}
\author{Yidun Wan}
\email{ydwan@fudan.edu.cn}
\affiliation{Department of Physics and Center for Field Theory and Particle Physics, Fudan University, Shanghai 200433, China}
\affiliation{Collaborative Innovation Center of Advanced Microstructures, Nanjing 210093, China}

\begin{abstract}
We generalize the twisted quantum double model of topological orders in two dimensions to the case with boundaries by systematically constructing the boundary Hamiltonians. Given the bulk Hamiltonian defined by a gauge group $G$ and a three-cocycle in the third cohomology group of $G$ over $U(1)$, a boundary Hamiltonian can be defined by a subgroup $K$ of $G$ and a two-cochain in the second cochain group of $K$ over $U(1)$. The consistency between the bulk and boundary Hamiltonians is dictated by what we call the Frobenius condition that constrains the two-cochain given the three-cocyle. We offer a closed-form formula computing the ground state degeneracy of the model on a cylinder in terms of the input data only, which can be naturally generalized to surfaces with more boundaries. We also explicitly write down the ground-state wavefunction of the model on a disk also in terms of the input data only. 

\end{abstract}
\pacs{11.15.-q, 71.10.-w, 05.30.Pr, 71.10.Hf, 02.10.Kn, 02.20.Uw}
\maketitle

\section{Introduction}\label{sec:intro}
Two-dimensional phases of matter with intrinsic topological orders\cite{Wen1989,Wen1989a,Wen1990a,Wen1990c,Kitaev2003a,Levin2004,Kitaev2006,Chen2012a,Levin2012,Hung2012,Hu2012,Hu2012a,Mesaros2011,Lin2014,Kong2014} have received significant and fast growing attention because of their potential applications in superconductivity\cite{Laughlin1988,Laughlin1988a,Tang2013}, quantum memory\cite{Dennis2002}, and topological quantum computation\cite{Kitaev2003a,Freedman2003,Stern2006,Nayak2008}. Promising candidates of two-dimensional topological orders are such as chiral spin liquids\cite{Kalmeyer1987,Wen1989a}, $\Z_2$ spin liquids\cite{Read1991,Wen1991,Moessner2001}, Abelian quantum Hall states\cite{Klitzing1980,Tsui1982,Laughlin1983}, and non-Abelian fractional quantum Hall states.\cite{TaoWu1984,Moore1991,Wen1991b,Willett1987,Radu2008}

Guided by symmetry considerations, a large class of two-dimensional topological orders can be described and classified by the twisted quantum double model (TQD)\cite{Propitius1995,Hu2012a,Mesaros2011}, which are a Hamiltonian extension of the three-dimensional Dijkgraf-Witten topological gauge theory\cite{Dijkgraaf1990} with finite gauge groups $G$ and $3$-cocyles $\alpha$ in the cohomology group $H^3[G,U(1)]$. In a topological order described by the TQD model on a closed surface with a finite gauge group $G$, anyon excitations carry representations of an emergent, generalized hidden symmetry specified by a quantum group, namely the TQD $D[G]$. The simplest example is the Kitaev model\cite{Kitaev2006}.
Later, the TQD model has also been generalized to three dimensions\cite{Wan2014}. 

Realistic materials that may realize topological orders mostly however have boundaries and thus urge the study of the TQD model on open surfaces. The untwisted version, namely the Kitaev model with boundaries has been studied by in Ref.\cite{Beigi2011} and in Refs.\cite{Cong2016a,Cong2017a}. Two of us also systematically constructed the boundary Hamiltonians\cite{Hu2017} of the Levin-Wen model, which is dual to the construction in this paper in the case with finite groups without cocycle twists. Kitaev and Kong also has a formulation of the gapped boundaries of topological orders in the language of categories\cite{Kitaev2012}, whose relation to the construction in Ref.\cite{Hu2017} is discussed in a parallel paper\cite{Hu2017a} also by two of us. The full TQD model has been studied mostly on closed manifolds, e.g., a torus. Very recently during the preparation of this manuscript, Wang, Wen, and Witten studied the gapped interfaces of symmetric topological orders based on the TQD model\cite{Wang2017}.  

Focus on two dimensions, when there are boundaries, the Hamiltonian of the model would have to contain boundary terms as well. Boundary terms turn affects the spectrum of the model in two aspects. First, a key feature of any topological order---its topologically protected ground state degeneracy (GSD)---may be modified due to its boundary conditions. Second, different boundary conditions correspond to different sets of anyons condensing at the boundaries. These two aspects had been an open problem of topological orders in two dimensions for about two decades until only recently when they were solved for Abelian topological orders\cite{Wang2012} and for general, non-Abelian topological orders\cite{HungWan2014,Lan2014}. Nevertheless, although these solutions do offer a computational method or a counting of the possible boundary conditions and GSD for any given boundary condition, they do not offer a closed-form formula of the GSD counting using solely the topological data of a generic non-Abelian topological order. Moreover, these solutions are abstract rather than being based on certain Hamiltonian model.  

In this work, we generalize the two-dimensional TQD model to the case with boundaries. It is worth of note that when there are no $3$-cocyle twists, the TQD model reduces to the usual Kitaev quantum double (KQD) model, whose boundary terms have been studied by Shor et al\cite{Beigi2011}. In Ref.\cite{Beigi2011}  the boundary conditions are classified by the subgroups of the gauge group $G$ that defines the KQD model. Each subgroup $K\subseteq G$ specifies a boundary anyon condensation. The subgroup $K=\{1_G\}$ with $1_G$ $G$'s identity specifies charge condensation, also known as the rough boundary condition; $K=G$ specifies the flux condensation, also known as the smooth boundary condition; and a $K\subset G$ specifies certain dyon condensation. In the TQD model however, the defining data consists both a gauge group $G$ and a $3$-cocycle $\alpha\in H^3[G,U(1)]$, such that the model describes more topological orders and more exotic anyon spectra than the KQD model does; hence, specifying a subgroup $K\subseteq G$ would not be sufficient for fully characterizing the possibly boundary conditions of a topological order described by the model. It is natural and reasonable to speculate that we need also to specify a $2$-cochain $\beta\in C^2[K,U(1)]$ along with a choice of $K$ because the boundaries are one-dimensional. 

Our strategy is as follows. First, we restrict the boundary degrees of freedom in the TQD model with boundaries to take values in $K\subseteq G$. Second, we add to the original TQD Hamiltonian certain boundary terms depending on $K$ and a $2$-cochain $\beta\in C^2[K,U(1)]$, such that the boundary terms do not affect the exact-solvability of the model. Third, we then study the properties of the model on open surfaces. Our main results are as follows.   

We extend a TQD bulk Hamiltonian by a local boundary Hamiltonian, where the boundary degrees of freedom are in a subgroup (not necessarily a proper one) of the gauge group in the bulk, and the local operators in boundary Hamiltonian are constructed in terms of $2$-cochains of the boundary subgroup. 
The boundary local Hamiltonian needs to be compatible with the bulk Hamiltonian, such that the ground states are invariant under topology-preserving mutation of triangulation both in the bulk and on the boundary. We find that the compatibility condition forms a Frobenius algebra structure on the input $2$-cochain. This agrees with the result in Ref\cite{Hu2017} for the Levin-Wen model with boundaries, which constructs the boundary Hamiltonian in terms of Frobenius algebra from a unitary fusion category.

Base on our boundary Hamiltonian, we write down a formula of the ground-state wavefunction of our model on a disk in terms of the input $2$-cochain only. We also derive a closed-form formula for the GSD on a cylinder in terms of the input data only. We show a couple of examples.
\section{Brief Review of the TQD model}\label{sec:modelRev}
In this section, we briefly review the TQD model on closed surfaces. The TQD model is defined by a infrared fixed point Hamiltonian $H_{G,\alpha}$, with $G$ a finite group and $\alpha\in H^3[G,U(1)]$, on a lattice $\Gamma$ that is a triangulation of a closed $2$D Riemannian surface (Fig. \ref{fig:GraphConfiguration}). Each edge $ab$ between two vertices $a$ and $b$ in $\Gamma$ is graced with a group element $[ab]\in G$, such that the Hilbert space of the model consists of all possible configurations of the group elements on the edges of $\Gamma$. Namely,
\be
\Hil_{\Gamma,G}=\{\{[i j]\in G|i,j\in V(\Gamma)\}\},
\ee
where $V(\Gamma)$ is the set of vertices of $\Gamma$. 
\begin{figure}[!h]
\centering
  \includegraphics[scale=0.6]{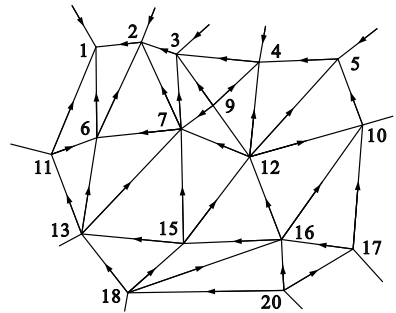}
  \caption{A portion of a graph that represent the basis vectors in the Hilbert space. Each edge carries an arrow and is assigned a group element denoted by $[ab]$ with $a<b$.}
  \label{fig:GraphConfiguration}
\end{figure}
The states are orthogonal in an obvious way. The group elements on the edges can be considered as the discretized gauge field of the underlying Dijkgraaf-Witten topological gauge theory. The graph is oriented with an arbitrary choice of the order of the vertices\footnote{The corresponding triangulation is said to have a branching structure.}, such that each edge is arrowed from its larger vertex to the smaller and  that $[ab]=[ba]^{-1}$, where the exponent $-1$ denote inverse of a group element. Such an ordering of the vertices is called an enumeration\cite{Hu2012}, which does not affect the physics as long as the relative order of the vertices remain unchanged when the graph mutates, i.e., expands or shrinks. The graph $\Gamma$ mutates via the Pachner moves\cite{Pachner1978,Pachner1987} of 2D triangulations, seen in Eq. \eqref{eq:Pachner}.  
\be\label{eq:Pachner}
\begin{aligned}
& f_1:\; \bmm\twoTriangles{}{}{}{}{0}\emm\mapsto\bmm\twoTriangles{}{}{}{}{1}\emm\;\\
& f_2:\; \bmm\oneTriangle{}{}{}\emm\mapsto\bmm\threeTriangles{}{}{}{}{1}\emm\\
& f_3:\; \bmm\threeTriangles{}{}{}{}{1}\emm\mapsto\bmm\oneTriangle{}{}{}\emm.
\end{aligned}
\ee

Certainly the mutation of $\Gamma$ turns $\Gamma$ into a different graph $\Gamma'$ and hence alters the total Hilbert space of the model. But it is shown in Ref.\cite{Hu2012} that the topological properties of the topological order described by the model $H_{G,\alpha}$ remains unchanged because a mutation cannot change the topology of the surface.

For simplicity, we neglect drawing the group elements on the edges but keep only the vertex labels. We may also often refer to $ab$ as an edge or the group element on that edge to avoid clutter. On any part of $\Gamma$ that resembles Fig. \ref{fig:3cocycleA}, one can define a  normalized $3$-cocycle $\alpha(v_1v_2, v_2v_3, v_3v_4)\in H^3[G,U(1)]$. The three variables in the $\alpha$ from left to right are the three group elements, $v_1v_2$, $v_2v_3$ and $v_3v_4$, which are along the path from the least vertex $v_1$ to the greatest vertex $v_4$ passing $v_2$ and $v_3$ in order. Necessary rudiments of such mathematical objects are reviewed in Appendix \ref{app:HnGU1}. Here one should keep in mind that an $\alpha$ is an equivalence class $[\alpha]$ of $U(1)$-valued functions on $G^3=G\x G\x G$. A normalized $\alpha$ is a particular representative of $[\alpha]$ that satisfies the normalization condition 
\begin{align}
  \label{NormalizationCondition}
  \alpha(1,g,h)=\alpha(g,1,h)=\alpha(g,h,1)=1
\end{align}
and the $3$-cocyle condition
\begin{align}
  \label{3CocycleCondition}
  &\alpha(g_1,g_2,g_3)\alpha(g_0\cdot g_1,g_2,g_3)^{-1}\times\\
  &\quad\alpha(g_0,g_1\cdot g_2,g_3)\alpha(g_0,g_1,g_2\cdot g_3)^{-1} \alpha(g_0,g_1,g_2)=1\nonumber
\end{align}
for all $g_i \in G$.

It is shown in Ref.\cite{Hu2012} that each $[\alpha]$ defines a topological order and the choice of the normalized $\alpha$ as the representative is merely a convenience that does not affect the physics. A graph like Fig. \ref{fig:3cocycleA} has a natural signature and hence the associated $3$-cocyle has a chirality determined as follows. One first reads off a list of the three vertices counter-clockwise from any of the three triangles of the defining graph of the $3$--cocycle, e.g., $(v_2,v_3,v_4)$ from Fig. \ref{fig:3cocycleA} and $(v_3,v_2,v_4)$ from Fig. \ref{fig:3cocycleB}. One then append the remaining vertex to the beginning of the list,  e.g., $(v_1,v_2,v_3,v_4)$ from Fig. \ref{fig:3cocycleA} and $(v_1,v_3,v_2,v_4)$ from Fig. \ref{fig:3cocycleB}. If the list can be turned into ascending order by even permutations, such as $(v_1,v_2,v_3,v_4)$ from Fig. \ref{fig:3cocycleA}, one has an $\alpha$ but an $\alpha^{-1}$ otherwise, as by $(v_1,v_3,v_2,v_4)$ from Fig. \ref{fig:3cocycleB}.
In an alternative point of view, if one lifts the vertex $v_2$ in Fig. \ref{fig:3cocycleA} above the paper plane, the three triangles turns out to be on the surface of a tetrahedron. In this sense, one can think of the $3$--cocycle as associated with a tetrahedron as well, and the signature of the graph is the very orientation of the corresponding tetrahedron. This is a useful picture when we evolve the graph $\Gamma$ by the Hamiltonian.
\begin{figure}[h!]
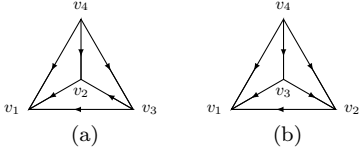

\centering
\subfigure[]{\threeTriangles{v_1}{v_3}{v_2}{v_4}{3} \label{fig:3cocycleA}}
\subfigure[]{\threeTriangles{v_1}{v_2}{v_3}{v_4}{2} \label{fig:3cocycleB}}
\caption{Given the enumeration $v_1<v_2<v_3<v_4$, (a) is the defining graph of the $3$--cocycle $\alpha([v_1v_2], [v_2v_3],[v_3v_4])$, and (b) for $\alpha([v_1v_2], [v_2v_3],[v_3v_4])^{-1}$.}
\label{fig:3cocycle}
\end{figure}
 
The $\alpha$ that defines the model $H_{G,\alpha}$ comprises the matrix elements of the Hamiltonian that reads 
\be\label{eq:Hamiltonian}
  H_{G,\alpha}=-\sum_v A_v-\sum_f B_f,
\ee
where $B_f$ is the face operator defined at each triangular face $f$, and $A_v$ is the vertex operator defined on each vertex $v$. The operator $B_f$ acts on a basis state vector as
\begin{align}
  \label{eq:actionOfBf}
  B_f\BLvert \oneTriangle{v_1}{v_2}{v_3} \Brangle
  =\delta_{[v_1v_2]\cdot[v_2v_3]\cdot[v_3v_1]}
  \BLvert \oneTriangle{v_1}{v_2}{v_3}\Brangle .
\end{align}
The discrete
delta function $\delta_{[v_1v_2]\cdot[v_2v_3]\cdot[v_3v_1]}$ is unity if ${[v_1v_2]\cdot[v_2v_3]\cdot[v_3v_1]=1 }$, where $1$ is the identity element in $G$, and 0 otherwise. Note again that here, the ordering of $v_1,v_2$, and $v_3$ does not matter because of the identities
$\delta_{[v_1v_2]\cdot[v_2v_3]\cdot[v_3v_1]}
=\delta_{[v_3v_1]\cdot[v_1v_2]\cdot[v_2v_3]}$ and
$\delta_{[v_1v_2]\cdot[v_2v_3]\cdot[v_3v_1]}
=\delta_{\{[v_1v_2]\cdot[v_2v_3]\cdot[v_3v_1]\}^{-1}}
=\delta_{[v_3v_1]^{-1}\cdot[v_2v_3]^{-1}\cdot[v_1v_2]^{-1}}
=\delta_{[v_1v_3]\cdot[v_3v_2]\cdot[v_2v_1]}$.
In other words,
in any state on which $B_f=1$ on a triangular face $f$, the three group degrees of freedom
around $v$ is related by a \textit{chain
rule}:
\begin{equation}
\label{eq:chainRule}
[v_1v_3]=[v_1v_2]\cdot[v_2v_3]
\end{equation}
for any enumeration $v_1,v_2,v_3$ of the three vertices of the face $f$. The chain rule \eqref{eq:chainRule} is physically known as the flatness condition in the sense that the gauge connection along the edges of a triangular face is flat. The operator $A_v$ acting on a vertex $v$ is an average
\begin{equation}
  \label{eq:Av}
  A_v=\frac{1}{|G|}\sum_{[vv']=g\in G}A_v^g,
\end{equation}
over the operators $A_v^g$ specified by a group element $g\in G$ acting on the same vertex. The action of $A^g_v$ replaces $v$ by a new enumeration $v'$ that is less than $v$ but greater than all the vertices that are less than $v$ in the original set of enumerations before the action, such that $v'v=g$. In a dynamical language, $v'$ is understood as on the next \textquotedblleft time" slice, and there is an edge $[v'v]\in G$ in the $(2+1)$ dimensional \textquotedblleft spacetime" picture. We illustrate such an action in the example below.
\begin{align}
  \label{eq:Avg}
  &A_{v_3}^g\BLvert \threeTriangles{v_1}{v_2}{v_3}{v_4}{2} \Brangle
  \nonumber\\
  =&\delta_{v'_3v_3,g}
  \alpha\left(v_1v_2,v_2v'_3,v'_3v_3\right)
  \alpha\left(v_2v'_3,v'_3v_3,v_3v_4\right)
  \nonumber\\
  &\times
  \alpha\left(v_1v'_3,v'_3v_3,v_3v_4\right)^{-1}
  \BLvert \threeTriangles{v_1}{v_2}{v'_3}{v_4}{2} \Brangle,
\end{align}
where on the RHS, the new enumerations are in the order $v_1<v_2<v'_3< v_3<v_4$, together with the following flatness conditions.
\be
\begin{aligned}
\label{eq:ChainRuleInBph}
&[v_1{v'_3}]=[v_1v_3]\cdot[v_3{v'_3}],\\
&[v_2{v'_3}]=[v_2v_3]\cdot[v_3{v'_3}],\\
&[{v'_3}v_4]=[{v'_3}v_3]\cdot[v_3v_4].
\end{aligned}
\ee

The basis vector on the LHS of \eqref{eq:Avg} is specified by six group elements,
$[v_1v_3]$, $[v_2v_3]$, $[v_3v_4]$, $[v_1v_4]$, $[v_2v_1]$, and $[v_2v_4]$.The phase factor consisting of three $3$--cocycles on the RHS of Eq. (\ref{eq:Avg}) encodes the non--vanishing matrix elements of $B^{v'_3}_{v_3}$, namely
\be\label{eq:AvMatrix}
\begin{aligned}
&\left(A_{v_3}^g\right)^{[v_1v_3][v_2v_3][v_3v_4]}_{[v_1v'_3] [v_2v'_3][v'_3v_4]}(v_1v_2,v_2v_3,v_1v_3)\\
=&  \alpha\left(v_1v_2,v_2v'_3,v'_3v_3\right)
  \alpha\left(v_2v'_3,v'_3v_3,v_3v_4\right)\\
  &\times
  \alpha\left(v_1v'_3,v'_3v_3,v_3v_4\right)^{-1}.
\end{aligned}
\ee
The $3$-cocycles appearing on the RHS of Eq. (\ref{eq:Avg}) can be easily understood from Fig. \ref{fig:BvTrivalent}.  This figure illustrates the time evolution of the graph before being acted on by $A^g_{v_3}$ to that after the action. We leave more details of this picture to Appendix \ref{app:TQD}.
\begin{figure}[h!]
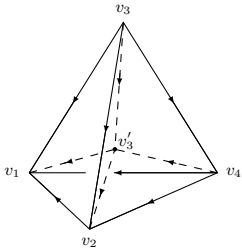

\centering
\BvTri{v_1}{v_2}{v_3}{v_4}{3}
\caption{The topology of the action of $A_{v_3}^g$.}
\label{fig:BvTrivalent}
\end{figure}

The vertex operator in Eq. \eqref{eq:Avg} can naturally extends its definition from a trivalent vertex to a vertex of any valence. The number of $3$--cocyles in the phase factor brought by the action of $A_v^g$ on a vertex is equal to the valence of the vertex. The chirality of each $3$-cocycle in the phase factor follows the criteria described in Appendix \ref{app:TQD}. It is clear that $A^{g=1}_v\equiv \mathbb{I}$ by the normalization of $\alpha$. It is shown that all $B_f$ and $A_v$ are projection operators and commute with each other (see Appendix \ref{app:algAvBf}), which renders the Hamiltonian \eqref{eq:Hamiltonian} exactly solvable. The ground states and all elementary excitations are thus common eigenvectors of these projectors; they carry representations of the TQD $D^\alpha[G]$. On a torus, there is a one-to-one correspondence between the ground state basis states and the types of anyon excitations. More precisely, on a torus, a ground state basis state or its corresponding anyon excitation can be labeled by $\ket{A,\mu}$, where $A$ is a conjugacy class of $G$ and $\mu$ an irreducible representation of the centralizer of $A$ in $G$. This representation $\mu$ is of a special type, called $\beta_{g^A}$-regular, which is explained in Appendix \ref{app:TQD}. This $\beta_{g^A}$ is a twisted $2$-cocycle derived from $\alpha$ via the slant product \eqref{eq:slantProd} introduced in Appendix \ref{app:HnGU1}. Interestingly, the topological orders described by the TQD model is not classified by the $3$-cocycles $\alpha\in H^3[G,U(1)]$ given $G$ but instead classified by the twisted $2$-cocycle $\beta_{g^A}$ derived from $\alpha$.\cite{Hu2012} On a torus, the GSD of the model $H_{G,\alpha}$ is
\be\label{eq:GSDrepresentations}
  \text{GSD}
  =\sum_{A}\#(\beta_{g^A}\text{--representations of }Z^A),
\ee     
where the sum runs over all conjugacy classes of $G$ and $Z^A\subset G$ is the centralizer of the conjugacy class $A$.

It is clear that the TQD model $H_{G,\alpha=1}$ reduces to the usual KQD model, where the action of the vertex operators $A_v^g$ implements gauge transformations on the group elements on the edges incident at the vertex $v$.
\section{TQD model with boundaries}
We now extend the TQD model reviewed in the previous section to one that works on open surfaces. To this end, we need to add boundary terms to the Hamiltonian \ref{eq:Hamiltonian}, preserving the exact solvability of the model. In Ref.\cite{Beigi2011}, for the KQD model on an open square lattice, the boundary operators descend directly from the bulk operators with, however, restricting the boundary gauge fields to take value in a subgroup $K\subseteq G$; different subgroups $K$ characterize different boundary conditions, or equivalently speaking, different boundary anyon condensation. Inspired by this construction, for the TQD model, we can construct the boundary operators likewise. Moreover, the compatibility between the bulk and boundary whose degrees of freedom are restricted to a subgroup $K\subseteq G$ of the TQD model still leaves room for another tweak. Namely, we can associate a $2$-cochain $\beta\in C^2[K, U(1)]$ to the action of a boundary vertex operator. Later, we will show that given a $K\subseteq G$, all possible boundary conditions, each specified by a $\beta$, are in one-to-one correspondence with the $2$-cocycles in $H^2[K,U(1)]$, which generalizes the consideration of a boundary $2$-cocycle in $H^2[K,U(1)]$ in Ref.\cite{Beigi2011} for the KQD model. 
\begin{figure}[h!]
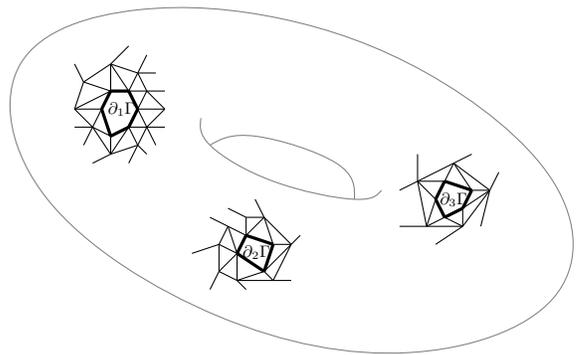

\label{fig:boundaries}
\torusHole
\caption{A torus with multiple holes. Only the lattice near the boundaries are shown explicitly.}
\end{figure}

Let us first write down the general Hamiltonian of the TQD model with multiple disjoint boundaries, followed by explanation.
\be\label{eq:HamWithBoundaries}       
H^{K,\beta}_{G,\alpha}=H_{G,\alpha}-\sum_{i=1}^M\left(\sum_{v\in\partial_i\Gamma}
A_v^{K_i}-\sum_{f\in\partial_i\Gamma}B_f^{K_i}\right),
\ee
where $H_{G,\alpha}$ is the bulk Hamiltonian \eqref{eq:Hamiltonian}, and the rest are the boundary terms. In this general form, we assume the lattice system $\Gamma$ has $M$ boundaries, $\partial_1\Gamma,\partial_2\Gamma,\dots,\partial_M\Gamma$, as sketched in Fig. \ref{fig:boundaries}. Each boundary certainly not necessarily bounds a hole but can be infinitely long, such as a side of an infinite strip. On the $i$-th boundary, the degrees of freedom are restricted to the subgroup $K_i\subseteq G$. A boundary vertex $v$ sits right on the boundary, whereas a boundary triangular face $f$ contains one and only one edge on the boundary and two virtual edges, as in Fig. \ref{fig:bPlaquette}. We now explain the boundary operators individually. Boundary plaquette operators simply project the boundary degrees of freedom to a subgroup $K\subseteq G$:
\be
\begin{aligned}
&B^K_f =\sum_{k\in K} B_f^k,\\
&B^k_f\BLvert\bPlaquette{1}{2}{}{}{}{}\Brangle\\
&=\delta_{[12],k}\BLvert\bPlaquette{1}{2}{}{}{}{}\Brangle.
\end{aligned}
\ee

\begin{figure}[h!]
\label{fig:bPlaquette}
\bPlaquette{1}{2}{}{bulk}{boundary}{f}
\caption{A boundary face $f$ is made of a boundary edge, say, $[12]$ and two virtual edges, the two dotted lines below the boundary. Only a segment of the boundary is shown.}
\end{figure}

The $B^K_p$ operator defined above thus clearly satisfies $B^K_pB^K_p=B^K_p$ and is a projector. The commutativity $[B^K_p,B^K_{p'}]=0$ is also obvious. Hereafter, we shall not draw any virtual boundary face. A boundary segment is always placed horizontally unless stated otherwise, such that the bulk is above the boundary.

The boundary vertex operators acts on the vertices right on a boundary, defined in the example below without loss of generality.
\be\label{eq:edgeAv}
\begin{aligned}
&A^K_v=\frac{1}{|H|}\sum_{k\in K}A^k_v,\\
&A^k_1\BLvert\bPlaquetteTwo{0}{2}{1}{3} \Brangle=\mathcal{A}(01,12,13,k) \BLvert\bPlaquetteTwo{0}{2}{1'}{3} \Brangle,
\end{aligned}
\ee
where in the second line, the vertex $1$ is chosen for illustration of the action, and $1'1:=k$ by definition. Here we only depict two bulk plaquettes because the rest plaquettes are irrelevant to the action of $A^k_1$. The action of $A^k_1$ replaces the boundary vertex $1$ by a new boundary vertex $1'$ with $1'<1$ (This notation is explained in Section \ref{sec:modelRev}.) together with an amplitude $\mathcal{A}(01,12,13,k)$, a $U(1)$ function of the group elements $01\in G$, and $12,13,k=1'1\in K$. 
\be\label{eq:edgeAvAmplitude}
\begin{aligned}
\mathcal{A}(01,12,13,k)&=\bmm\AvAmplitude{2}{1}{3}{0}{1'}\emm\\
&=\frac{\alpha(01',1'1,12)\beta(1'1,12)}{\alpha(01',1'1,13)\beta(1'1,13)}.
\end{aligned}
\ee

Similar to the action of a bulk vertex operator described in the previous section, the action of an edge vertex operator, such as the $A^k_1$ in Eq. \eqref{eq:edgeAv}, evolves the original spatial lattice to a new spatial lattice. Such an evolution creates a spacetime $3$-complex, e.g., the one in Eq. \eqref{eq:edgeAvAmplitude}, which encodes the amplitude of the action. Let us explain the amplitude \eqref{eq:edgeAvAmplitude}. The two $3$-cocycles $\alpha(01',1'1,12)$ and $\alpha(01',1'1,13)^{-1}$ are respectively associated with the tetrahedra $01'12$ and $01'13$ in the $3$-complex in Eq. \eqref{eq:edgeAvAmplitude}. As in the case with bulk vertex operators, the newly created three triangles along the time direction due to the action of $A^k_1$, namely $01'1$, $1'12$, and $1'13$, shaded in Eq. \eqref{eq:edgeAvAmplitude}, must be flat as well, leading to the following chain rules of group elements:
\be
01'\cdot 1'1=01,\quad 1'1\cdot 12=1'2,\quad 1'1\cdot 13=1'3.
\ee 
This is why the amplitude \eqref{eq:edgeAvAmplitude} $\mathcal{A}(01,12,13,k)$ depends only on the original group elements $01,12,13$, and the group element $k=1'1$. A boundary vertex operator differs from a bulk vertex operator by the boundary $2$-cochains in its amplitudes, which we now elaborate on. 

Staring at the figure in the amplitude \eqref{eq:edgeAvAmplitude}, one sees two boundary triangles, $1'12$ and $1'13$, extending along the time direction due to the action of $A^k_1$. This enables the freedom of associating a $U(1)$ factor with each of the two boundary temporal triangles that contributes to the amplitude of $A^k_1$. Such a $U(1)$ factor depends only on the group elements of $K\subseteq G$ on the sides of the corresponding temporal triangle. Since any boundary temporal triangle must satisfy the flatness condition, as it is created by a boundary vertex operator, it inhabits only two independent group elements of $K$. Without any further constraints, hence, such a $U(1)$ factor is a $2$-cochain $\beta\in C^2[K,U(1)]$. Such a $2$-cochain also depends on the orientation of the boundary temporal triangle. The canonical orientation of a triangle on the boundary of a tetrahedron is defined in this way: One grabs using one hand such a triangle along the ascending direction of the vertex labels of the triangle, while keeping the thumb pointing outside of the tetrahedron; if this can only be achieved by one's right (left) hand, the triangle has a positive (negative) orientation. For example, the boundary temporal triangle $1'12$ has positive orientation, whereas $1'13$ has negative orientation; hence, respectively they contribute to the amplitude \eqref{eq:edgeAvAmplitude} $2$-cochains $\beta(1'1,12)$ and $\beta(1'1,13)^{-1}$.

One may wonder why the amplitude of a bulk vertex operator, say Eq. \eqref{eq:Avg} for example, does not contain any $2$-cochains associated with the relevant bulk temporal triangles. The reason is, each bulk temporal triangle belongs to two neighbouring tetrahedra and would thus contribute a $2$-cochain twice to the amplitude but with opposite signs, hence canceling each other. As such, it is only at a boundary the freedom of choosing a $2$-cochain takes effect.

Having introduced the the action of the boundary operators in the new Hamiltonian \eqref{eq:HamWithBoundaries}, we need to check whether these operators are still commuting projectors and their commutativity with the bulk operators. We would leave all such detailed calculations to the appendix but only show in below the commutativity between any two boundary vertex operators because this will lead to the Frobenius condition, which is of paramount importance in this work.     

Consider two boundary vertex operators $A^k_v$ and $A^{k'}_{v'}$. If $v$ and $v'$ are not directly connected by a boundary edge, then obviously $[A^k_v,A^{k'}_{v'}]=0$. Otherwise, let us concretely compute the scenario in Fig. \ref{fig:AvCommute}.
\begin{figure}[h!]
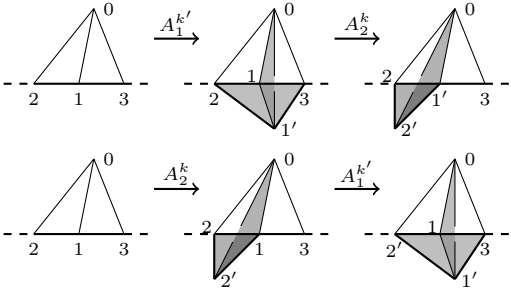

\centering
\AvCommute
\caption{The amplitudes of $A^k_2 A^{k'}_1$ (first row) and $A^{k'}_1 A^k_2$ (second row). Note that $1'1=k'$ and $2'2=k$. Shaded triangles are flat.}
\label{fig:AvCommute}
\end{figure}

We can extract from the spacetime $3$-complexes in Fig. \ref{fig:AvCommute} the following two amplitudes respectively of $A^k_2 A^{k'}_1$ and $A^{k'}_1 A^k_2$.
\begin{align}
A^k_2 A^{k'}_1\BLvert \bPlaquetteTwo{0}{2}{1}{3} \Brangle &=\BLvert \bPlaquetteTwo{0}{2'}{1'}{3} \Brangle \\
&\x \frac{\alpha(01',1'1,12)}{\alpha(01',1'1,13)\alpha(01',1'2',2'2)} \nonumber\\
&\x \frac{\beta(1'1,12)}{\beta(1'1,13)\beta(1'2',2'2)}\nonumber.
\end{align}
\begin{align}
A^{k'}_1 A^k_2 \BLvert \bPlaquetteTwo{0}{2}{1}{3} \Brangle &=\BLvert \bPlaquetteTwo{0}{2'}{1'}{3} \Brangle \\
&\x \frac{\alpha(01',1'1,12')}{\alpha(01,12',2'2)\alpha(01',1'1,13)} \nonumber\\
&\x \frac{\beta(1'1,12')}{\beta(12',2'2)\beta(1'1,13)}\nonumber.
\end{align}
The task now is to demonstrate that the two amplitudes above are equal. It suffices to show that 
\be\label{eq:AvCommute}
 \frac{\alpha(01',1'1,12)\beta(1'1,12)}{\alpha(01',1'2',2'2)\beta(1'2',2'2)}
 = \frac{\alpha(01',1'1,12')\beta(1'1,12')}{\alpha(01,12',2'2)\beta(12',2'2)}.
\ee
Using the $3$-cocycle condition
\be
\frac{\alpha(1'1,12',2'2)\alpha(01',1'2',2'2)\alpha(01',1'1,12')}{\alpha(01,12',2'2) \alpha(01',1'1,12)}=1,
\nonumber
\ee
Eq. \eqref{eq:AvCommute} boils down to the following condition
\be\label{eq:FrobeniusCond}
\alpha(1'1,12',2'2)\frac{\beta(12',2'2)\beta(1'1,12)}{\beta(1'1,12')\beta(1'2',2'2)} =1.
\ee
In other words, if we demand that $[A^k_2,A^{k'}_{1'}]=0$, the above condition must be hold. If not, the Hamiltonian \eqref{eq:HamWithBoundaries} ceases being exactly solvable. Since our purpose is to construct an exactly solvable Hamiltonian with boundaries, we would not consider the possibility of violating the above condition. Condition \eqref{eq:FrobeniusCond} is mathematically known as the Frobenius condition\cite{}, which can also be presented graphically as
\be\label{eq:FrobeniusCondGraph}
\bmm \Frobenius \emm =1.
\ee  
In the equation above, the group elements on the edges all lie in the subgroup $K\subseteq G$, and each triangle is flat. The tetrahedron $1'12'2$ in Eq. \eqref{eq:FrobeniusCondGraph} corresponds to the $3$-cocycle $\alpha(1'1,12',2'2)$ in Eq. \eqref{eq:FrobeniusCond}. The four flat boundary temporal triangles $1'12'$, $1'12$, $1'2'2$, and $12'2$ corresponds respectively to the four $2$-cochains in Eq. \eqref{eq:FrobeniusCond}.  

Here is the essence of the Frobenius condition. The $3$-cocycle $\alpha\in H^3[G,U(1)]$ that defines the model must become cohomologically trivial, i.e., $\alpha|_K \sim\ 1$, when all its three arguments are restricted to the subgroup $K\subseteq G$ because it is equal to a coboundary made of the $2$-cochains $\beta$ in Eq. \eqref{eq:FrobeniusCond}. This strongly constrains what boundary conditions are feasible in the TQD model with certain gauge group $G$. More precisely speaking, the Frobenius condition restricts what subgroups of $G$ can live on a boundary of the model. 

For better understanding of this point, let us consider the simplest example, $D^\alpha[\Z_2]$, the TQD of $G=\Z_2$. Because $H^3[\Z(2),U(1)]=\Z_2$, there are only two such models. One is the $\Z_2$-toric code defined by $\alpha\equiv 1$. The other is the doubled semion model defined by $\alpha(g_1,g_2,g_3)=-1$ if $g_1,g_2,g_3\neq 1$ otherwise $\alpha=1$. Therefore, for the $\Z_2$-toric code, the only two subgroups of $\Z_2$, namely the trivial group $\{1\}$ or the entire $\Z_2$, can be legal boundary conditions, as $\alpha\equiv 1$. That is, the $\Z_2$-toric code has two possible boundary conditions. Nevertheless, for the doubled semion model, in order to satisfy the Frobenius condition, only $K=\{1\}$ is allowed to exist a boundary. That is, the doubled semion has a unique boundary condition. We would get back to this example again later in the paper.  

Having shown that the boundary vertex operators commute with each other, we also need to show that they are projectors, namely $A^K_v A^K_v=A^K_v$ $\forall v\in\partial\Gamma$. For this to hold, it suffices to show that $A^{k'}_v A^k_v=A^{k'k}_v$. This can be done using the $3$-cocyle condition \eqref{3CocycleCondition} and the Frobenius condition \eqref{eq:FrobeniusCond}, the detail of which is left to Appendix \ref{appd}. 

As such, the Hamiltonian \eqref{eq:HamWithBoundaries} is again exactly solvable and composed of projectors. We can then place the model on surfaces with boundaries to study its physical properties, as what we are going to do shortly. Before that, let us prove the following theorem, as promised earlier.
\begin{theorem}
Given a $K\subseteq G$, the $2$-cochain solutions $\beta\in C^2[K,U(1)]$ to the Frobenius condition \eqref{eq:FrobeniusCond} are in one-ton-one correspondence with the $2$-cocycle in $H^2[K,U(1)]$.
\end{theorem}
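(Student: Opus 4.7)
The plan is to recast the Frobenius condition~\eqref{eq:FrobeniusCond} as a cohomological identity and then classify its solutions by cohomology. Parsing the flatness chain rules behind~\eqref{eq:FrobeniusCondGraph}, set $k_1:=[1'1]$, $k_2:=[12']$, $k_3:=[2'2]$; all six boundary-type edges of the diagram carry elements of $K$, with the remaining three fixed by $k_1k_2=[1'2']$, $k_2k_3=[12]$, $k_1k_2k_3=[1'2]$. The Frobenius condition then becomes
\[
\alpha(k_1,k_2,k_3)\;=\;(d\beta)(k_1,k_2,k_3)^{-1},\qquad k_i\in K,
\]
where $d\colon C^2[K,U(1)]\to C^3[K,U(1)]$ is the standard group-cohomology coboundary operator recalled in Appendix~\ref{app:HnGU1}. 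In other words, the Frobenius condition is precisely the statement that $\beta$ trivializes $\alpha|_K$ at the cochain level.

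Existence of even a single such $\beta$ immediately forces $[\alpha|_K]=1$ in $H^3[K,U(1)]$, the cohomological admissibility requirement for a $K$-boundary (consistent with the $\Z_2$-examples discussed in the text). Assuming admissibility, fix a reference solution $\beta_0$. Any other solution $\beta$ produces $\eta:=\beta/\beta_0$ satisfying $d\eta=1$, hence $\eta\in Z^2[K,U(1)]$; conversely, for every $\eta\in Z^2[K,U(1)]$ the cochain $\eta\cdot\beta_0$ is again a solution. The set of solutions is therefore a torsor over $Z^2[K,U(1)]$ and, once $\beta_0$ has been chosen, is in non-canonical bijection with $Z^2[K,U(1)]$.

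To descend from $Z^2$ to $H^2$ I would then impose the physical equivalence $\beta\sim\beta\cdot d\gamma$ for $\gamma\in C^1[K,U(1)]$: two boundary cochains differing by a $2$-coboundary should define unitarily equivalent boundary Hamiltonians via a local gauge transformation built out of $\gamma$ acting diagonally on the boundary-edge degrees of freedom. A direct computation would then show that conjugating $A_v^{K,\beta}$ by this unitary converts the amplitude~\eqref{eq:edgeAvAmplitude} for $\beta$ into the amplitude for $\beta\cdot d\gamma$, while leaving the bulk vertex operators, bulk face projectors, and boundary face projectors manifestly invariant. Consequently, physically distinct $\beta$-solutions are in bijection with $Z^2[K,U(1)]/B^2[K,U(1)]=H^2[K,U(1)]$, which is the claim.

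The main obstacle will be producing this local unitary explicitly: one has to fix the combination of $\gamma$-phases on the boundary edges around $v$ — respecting $[ab]=[ba]^{-1}$ and being careful about which incident edges are boundary versus bulk — so that conjugation of $A_v^{K,\beta}$ reproduces exactly $d\gamma$ in the amplitude and nothing else, and then check compatibility with the bulk tetrahedra $\alpha(01',1'1,12)$ and $\alpha(01',1'1,13)$ appearing in~\eqref{eq:edgeAvAmplitude} as well as with the commutation relations against neighboring $A_{v'}^{K,\beta}$ already established via the Frobenius identity. Once this unitary is exhibited, the rest of the argument is the standard identity $Z^2/B^2=H^2$.
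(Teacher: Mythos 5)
Your core argument coincides with the paper's: rewrite the Frobenius condition as $\alpha|_K\cdot\dd\beta=1$, fix a reference solution $\beta_0$, observe that the quotient of any two solutions is a $2$-cocycle and that multiplying a solution by any $2$-cocycle gives another solution. Where you genuinely diverge is in what you do with the resulting torsor. The paper simply identifies $H^2[K,U(1)]$ with a chosen set of representative cocycles $\{\tilde\beta_i\}$ and declares the correspondence established; read literally, its two directions actually establish a bijection between the set of solutions in $C^2[K,U(1)]$ and the full cocycle group $Z^2[K,U(1)]$, and the passage to $H^2$ is left implicit. You make this distinction explicit and propose to close it by declaring $\beta\sim\beta\cdot\dd\gamma$ a physical equivalence implemented by a diagonal boundary unitary built from $\gamma\in C^1[K,U(1)]$. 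That extra step is not in the paper at all, and it is the honest way to justify why the tables in the examples count $H^2$ classes rather than cocycles (e.g.\ for cyclic $K$ with $\alpha|_K\equiv 1$ every element of $Z^2$, including the continuum of coboundaries, literally solves the condition, yet the paper counts one solution). You leave that unitary unconstructed --- the orientation bookkeeping you flag, assigning $\gamma$ versus $\gamma^{-1}$ to the two boundary edges $12$ and $13$ so that conjugation produces exactly $\dd\gamma(1'1,12)/\dd\gamma(1'1,13)$ in the amplitude of Eq.~\eqref{eq:edgeAvAmplitude}, is routine but does need to be done --- so your write-up proves what the paper's proof proves (the $Z^2$ torsor) and sketches, without completing, the stronger and more defensible statement. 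If you only want to match the paper, you can stop after the torsor argument; if you keep the gauge-equivalence step, it should be carried out rather than deferred.
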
 
\begin{proof}
Among all possible solutions $\beta$ to Eq. \eqref{eq:FrobeniusCond}, let us take an arbitrary one and call it $\beta_0$. Let $H^2[K,U(1)]=\{\tilde\beta_1, \tilde\beta_2,\dots,\tilde\beta_n\}$, where $\tilde\beta_i$ and $\tilde\beta_j$ are in equivalent $2$-cocycles. Because $\dd \tilde\beta_i\equiv 1,\ \forall i$ and $\dd(\beta\tilde\beta_i)=\dd \beta\dd \tilde\beta_i$, for any $\beta_0$, the $H^2[K,U(1)]$ yields a set of solutions $\{\beta_0,\beta_0\tilde\beta_1, \beta_0 \tilde\beta_2,\dots,\beta_0\tilde\beta_n\}$ to Eq. \eqref{eq:FrobeniusCond}.

Conversely, consider any other solution $\beta_m$ to Eq. \eqref{eq:FrobeniusCond}, we have $\alpha\dd\beta_m =\alpha\dd\beta_0=1$. Hence, 
\[ 
\dd\beta_m=\dd\beta_0\Rightarrow \dd(\beta_m\beta_0^{-1})=1\Rightarrow \beta_m =\beta_0\tilde\beta_m, 
\]
where $\tilde\beta_m\in H^2[K,U]$ is a $2$-cocycle. The one-to-one correspondence is thus established. And it does not matter which solution $\beta_0$ we choose to generate the set of solutions $\{\beta_0,\beta_0\tilde\beta_1, \beta_0 \tilde\beta_2, \dots,\beta_0\tilde\beta_n\}\xeq{\mathrm{def}}\{\beta_i|i=0,\dots, n=|H^2[K,U(1)]|-1\}$. For future convenience, we denote this set of $2$-cochains that specify all possible boundary conditions for a given $K\subseteq G$ by $\Lambda_K$.  
\end{proof}  

Now the question is whether two pairs $(\alpha,\beta)$ and $(\alpha',\beta')$ , where $\alpha,\alpha'\in H^3[G,U(1)]$ and $\beta,\beta'\in \Lambda_K$, define the same TQD model with a boundary. 

\section{GSD on a cylinder}
We first consider the first nontrivial case, namely a sphere with two holes, which is homeomorphic to a cylinder. We shall place our model on the cylinder (Fig. \ref{fig:cylinder}) and assume the two ends of the cylinder may respectively possess boundary conditions specified by subgroups $K_1$ and $K_2$ of the gauge group $G$. The two subgroups $K_1$ and $K_2$ may or may not be the same. 
\begin{figure}[h!]
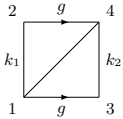

\cylinderGraph{g}{g}{k_1}{k_2}{1}
\caption{Lattice triangulation of a cylinder. The arrows indicate the identified edges. The group elements are $g\in G$, $k_1\in K_1\subseteq G$, and $k_2\in K_2\subseteq G$.}
\label{fig:cylinder}
\end{figure}

Since now we are interested in ground states only, we can assume flatness on the two triangles in Fig. \ref{fig:cylinder}. That is, we are working in the subspace $\Hil^{B_f=1}$ of the total Hilbert space. Hence, the group element on the diagonal edge $14$ in Fig. \ref{fig:cylinder} is determined by the group elements on the horizontal and vertical edges in the figure. Note however that we have both $14=k_1 g$ and $14=g k_2$; hence, $k_1=g k_2 \bar g$, i.e. $k_1$ and $k_2$ despite in possibly different subgroups of $G$ still belong to the same conjugacy class of $G$ in the ground state space.

In the subspace $\Hil^{B_f=1}$, also because the vertices in Fig. \ref{fig:cylinder} are all boundary vertices, the ground state projector reduces to 
\be
\label{eq:cylGSDproj}
P^0_{cyl}=\Pi_{v\in\partial\Gamma}A_v=
\frac{1}{|K_1||K_2|}\sum_{x\in K_1} A^{x}_{1=2} \sum_{y\in K_2} A^{y}_{3=4},
\ee
where in fact vertices $1$ and $2$ are identified, and vertices $3$ and $4$ are identified. Note that when act the above operator on the cylinder, one still needs to act $A^x$ on vertices $1$ and $2$ individually, as if $1$ and $2$ are different vertices; however, the identification of $1$ and $2$ will be automatically accounted for by the periodic boundary condition and that there is only one normalization factor $1/|K_1|$. The same procedure applies to $A^y$ on vertices $3$ and $4$. The GSD of our model on the cylinder in Fig. \ref{fig:cylinder} thus can be obtained by
\be
\begin{aligned}
& GSD_{cyl} \\
=& \mathrm{Tr}_{\Hil^{B_f=1}} P^0_{cyl} \\
=& \sum_{\scriptsize \substack{k_1\in K_1,\\ k_2\in K_2,\\ g\in G}}\delta_{k_1, gk_2\bar g} \Blangle\cylinderGraph{g}{g}{k_1}{k_2}{0.8} \BRvert P^0_{cyl}\BLvert\cylinderGraph{g}{g}{k_1}{k_2}{0.8} \Brangle
\end{aligned}
\ee

To obtain a concrete answer, we first check how the projector $P^0_{cyl}$ acts on the cylinder. The order of acting the vertex operators comprising $P^0_{cyl}$ is irrelevant because they commute. To simply the calculation, however, we choose to act on the vertices in descending order. The entire action creates a spacetime $3$-complex shown in Fig. \ref{fig:cylinderGSD}, from which we can extract the amplitude of $P^0_{cyl}$ as follows.
\begin{align}
P^0_{cyl}\ket{12,13,34} &=\frac{1}{|K_1||K_2|}\sum_{x\in K_1}\sum_{y\in K_2} \ket{1'2',1'3',3'4'}\label{eq:cylGSDprojAmp} \\
&\x \frac{\alpha(13,34',4'4)\beta(34',4'4)}{\alpha(12,24',4'4)\alpha(13',3'3,34') \beta(3'3,34')}\nonumber \\
&\x\frac{\alpha(12',2'2,24') \alpha(1'1,13',3'4')\beta(1'1,12')}{\alpha(1'1,12',2'4') \beta(12',2'2)}, \nonumber
\end{align}
where take a simpler notation for the state on the cylinder, namely $\ket{12,13,34}$ and $\ket{1'2',1'3',3'4'}$ for the initial and final states, and $1'1=2'2=x\in K_1$, $3'3=4'4=y\in K_2$.

\begin{figure}[h!]
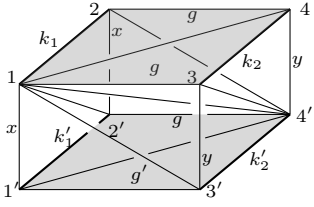

\cylinderGSD
\caption{The spacetime $3$-complex created by acting $P^0_{cyl}$ on the cylinder in Fig. \ref{fig:cylinder}. The top and bottom layers are respectively the original and the new lattices of the cylinder before and after the action. Time direction is downward. All the triangles are flat.}
\label{fig:cylinderGSD}
\end{figure}  

Using the flatness condition on all the triangles in Fig. \ref{fig:cylinderGSD}, in terms of the group elements explicitly, Eq. \eqref{eq:cylGSDprojAmp} becomes
\begin{align}
P^0_{cyl}\ket{k_1,g,k_2} &=\frac{1}{|K_1||K_2|}\sum_{x\in K_1}\sum_{y\in K_2} \ket{xk_1\bar x, xg\bar y, yk_2\bar y}\nonumber \\
&\x \frac{\alpha(g,k_2\bar y,y)\beta(k_2\bar y,y)}{\alpha(k_1,g\bar y,y)\alpha(g\bar y,y,k_2\bar y) \beta(y,k_2\bar y)}\label{eq:cylGSDprojAmpExpl} \\
&\x\frac{\alpha(k_1\bar x,x,g\bar y) \alpha(x,g\bar y,yk_2\bar y)\beta(x,k_1\bar x)}{\alpha(x,k_1\bar x,xg\bar y) \beta(k_1\bar x,x)}. \nonumber
\end{align}

The GSD on the cylinder in can then be obtained as
\begin{align}
GSD_{cyl}&=\sum_{\substack{k_1\in K_1 \\k_2\in K_2}}\sum_{g\in G}\delta_{k_1,gk_2\bar g} \bra{k_1,g,k_2} P^0_{cyl}\ket{k_1,g,k_2} \nonumber \\
&= \sum_{\substack{x,k_1\in K_1 \\y,k_2\in K_2}}\sum_{g\in G} \frac{\delta_{k_1,gk_2\bar g}\delta_{k_1x, xk_1}\delta_{k_2y,yk_2}\delta_{xg,gy}}{|K_1||K_2|}  \nonumber \\
&\x \frac{\alpha(g,k_2\bar y,y)\beta(k_2\bar y,y)}{\alpha(k_1,g\bar y,y)\alpha(g\bar y,y,k_2\bar y) \beta(y,k_2\bar y)}\nonumber \\
&\x\frac{\alpha(k_1\bar x,x,g\bar y) \alpha(x,g\bar y,yk_2\bar y)\beta(x,k_1\bar x)}{\alpha(x,k_1\bar x,xg\bar y) \beta(k_1\bar x,x)}.
\end{align}
The above expression can be massaged into a more compact and topologically meaningful form by the following procedure. First, using the $\delta$-functions above and the relations $x=gy\bar g$, $xg\bar y= g$, $h_1=gk_2 g^{-1}$, and $x=g yg^{-1}$ implied by the flatness conditions, $GSD_{cyl}$ becomes
\begin{align*}
GSD_{cyl}
&= \sum_{\substack{x,k_1\in K_1 \\y,k_2\in K_2}}\sum_{g\in G} \frac{\delta_{k_1,gk_2\bar g}\delta_{k_1x, xk_1}\delta_{k_2y,yk_2}\delta_{xg,gy}}{|K_1||K_2|}   \\
&\x \frac{\alpha(g,k_2\bar y,y)\beta(k_2\bar y,y)}{\alpha(k_1,g\bar y,y)\alpha(g\bar y,y,k_2\bar y) \beta(y,k_2\bar y)} \\
&\x\frac{\alpha(gk_2\bar y\bar g,gy\bar g,g\bar y) \alpha(gy\bar g,g\bar y,k_2)\beta(gk_2\bar y\bar g,g y \bar g)}{\alpha(gy\bar g,gk_2\bar y\bar g,g) \beta(gy\bar g,gk_2\bar y\bar g)}.
\end{align*} 
Applying the $3$-cocyle condition
\[
\frac{\alpha(gk_2\bar y\bar g,gy\bar g,g\bar y)\alpha(gy\bar g,gk_2\bar g,g\bar y)\alpha(gy\bar g,gk_2\bar y\bar g,gy\bar g)}{\alpha(gk_2\bar g,gy\bar g,g\bar y)\alpha(gy\bar g,gk_2\bar y\bar g,g)}=1
\]
and the Frobenius condition
\[
\alpha(gy\bar g,gk_2\bar y\bar g,gy\bar g)^{-1} \frac{\beta(gy\bar g,gk_2\bar y\bar g)\beta(gk_2\bar g,gy\bar g)}{\beta(gk_2\bar y\bar g,gy\bar g)\beta (gy\bar g,gk_2\bar g)} =1,
\]
we obtain
\begin{align*}
GSD_{cyl}
&= \sum_{\substack{x,k_1\in K_1 \\y,k_2\in K_2}}\sum_{g\in G} \frac{\delta_{k_1,gk_2\bar g}\delta_{k_1x, xk_1}\delta_{k_2y,yk_2}\delta_{xg,gy}}{|K_1||K_2|}   \\
&\x \frac{\alpha(g,k_2\bar y,y)\beta(k_2\bar y,y)}{\alpha(k_1,g\bar y,y)\alpha(g\bar y,y,k_2\bar y) \beta(y,k_2\bar y)} \\
&\x\frac{\alpha(gk_2\bar g,gy\bar g,g\bar y) \alpha(gy\bar g,g\bar y,k_2)\beta(g y \bar g,gk_2 \bar g)}{\alpha(gy\bar g,gk_2 \bar g,g\bar y) \beta(gk_2 \bar g,gy\bar g)}.
\end{align*} 
Again, by the $3$-cocycle condition
\[
\frac{\alpha(g \bar y,y,k_2)\alpha(g,k_2 \bar y, y)}{\alpha(g \bar y,k_2 y)\alpha(g \bar y, y, k_2 \bar y)\alpha(y,k_2 \bar y,y)}=1
\]
and the Frobenius condition
\[
\alpha(y, k_2 \bar y, y)\frac{\beta(k_2 \bar y, y)\beta(y,k_2)}{\beta(y,k_2 \bar y)\beta(k_2,y)}=1,
\]
we have
\begin{align*}
GSD_{cyl}
&= \sum_{\substack{x,k_1\in K_1 \\y,k_2\in K_2}}\sum_{g\in G} \frac{\delta_{k_1,gk_2\bar g}\delta_{k_1x, xk_1}\delta_{k_2y,yk_2}\delta_{xg,gy}}{|K_1||K_2|}   \\
&\x \frac{\alpha(g\bar y,k_2,y)\beta(k_2,y)}{\alpha(gk_2\bar g,g\bar y,y)\alpha(g\bar y,y,k_2) \beta(y,k_2)} \\
&\x\frac{\alpha(gk_2\bar g,gy\bar g,g\bar y) \alpha(gy\bar g,g\bar y,k_2)\beta(g y \bar g,gk_2 \bar g)}{\alpha(gy\bar g,gk_2 \bar g,g\bar y) \beta(gk_2 \bar g,gy\bar g)}.
\end{align*} 
Now in the expression above, we can apply the definition of twisted $2$-cocycles to the two groups of $3$-cocycles, namely,
\[
\frac{\alpha(g\bar y,k_2,y)}{\alpha(gk_2\bar g,g\bar y,y)\alpha (g\bar y,y,k_2) } =\beta_{gk_2\bar g}(g\bar y, y)^{-1}
\] 
and
\[
\frac{\alpha(gk_2\bar g,gy\bar g,g\bar y) \alpha(gy\bar g,g\bar y,k_2)}{\alpha(gy\bar g,gk_2 \bar g,g\bar y) }=\beta_{gk_2\bar g}(gy\bar g,g\bar y).
\]
Finally, we obtain
\begin{align}
&GSD_{cyl} \nonumber\\
=& \sum_{\substack{x,k_1\in K_1 \\y,k_2\in K_2}}\sum_{g\in G} \frac{\delta_{k_1,gk_2\bar g}\delta_{k_1x, xk_1}\delta_{k_2y,yk_2}\delta_{xg,gy}}{|K_1||K_2|} \label{eq:GSDcylSimple} \\
&\x  \frac{\beta(g y \bar g,gk_2 \bar g)\beta_{gk_2\bar g}(gy\bar g,g\bar y)}{\beta(gk_2 \bar g,gy\bar g)} \frac{\beta(k_2,y)}{ \beta_{gk_2\bar g}(g\bar y, y)\beta(y,k_2)}. \nonumber
\end{align}

The $3$-cocycles and Frobenius conditions applied in the procedure above to obtain Eq. \eqref{eq:GSDcylSimple} are in fact topological moves that turn the triangulation in Fig. \ref{fig:cylinderGSD} into the following triangulation in Fig. \ref{fig:cylinderGSDsimple}.
\begin{figure}[h!]
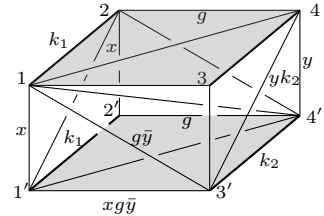

\cylinderGSDsimple
\caption{Triangulation of the cube obtained from that in Fig. \ref{fig:cylinderGSD} by the topological moves associated with the $3$-cocycle and Frobenius conditions used to obtain Eq. \eqref{eq:GSDcylSimple}.}
\label{fig:cylinderGSDsimple}
\end{figure}
 
\begin{figure}[h!]
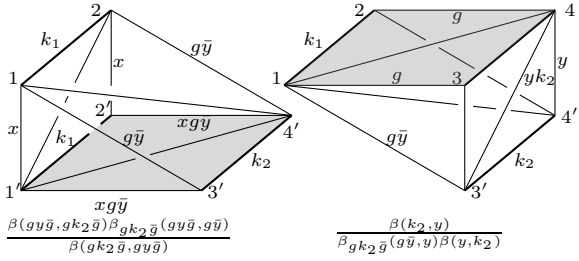

\cylinderGSDsplit
\caption{Splitting the cube in Fig. \ref{fig:cylinderGSDsimple} along the plane $123'4'$ into two halves. Each half is associated with a combination of two $2$-cochains and one twisted $2$-cocycle, presented below the corresponding figure. }
\label{fig:cylinderGSDsplit}
\end{figure}

\section{Ground states on a disk}

Here, we present an explicit formula of the ground-state wavefunction on a disk.

Since we are interested in boundary theories only, we assume no quasiparticles existing in the bulk. Any triangulation of a disk can be reduced to a pie-disk using Pachner moves in the bulk, such that there is only one vertex left in the bulk. See Fig. \ref{fig:Pidisk}.
\begin{figure}[!ht]
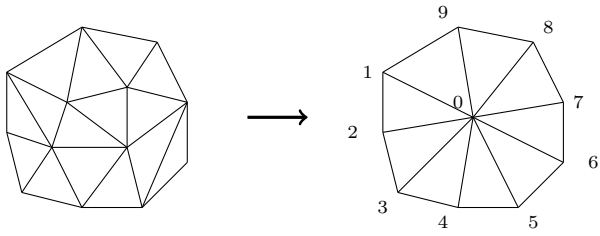

        \centering
        \TriangulationMuttation
        \caption{Any triangulation of a disk (left) can be reduced to a pie-disk using the Pachner moves in the bulk.}
        \label{fig:Pidisk}
\end{figure}

Boundary theories can be studied on the reduced triangulation with $N$ boundary vertices (and $N$ triangles in the bulk). We denote the basis of the Hilbert space by
\begin{equation}\label{eq:Diskbasis}
\{\ket{a_1,a_2,\dots,a_N}\}
\end{equation}
with $a_n=[0n]$ in Fig. \ref{fig:Pidisk}. The boundary edges are determined by these $a_n$.

The ground state is expressed by
\begin{align}\label{eq:GSwavefunction}
\Phi(\{a_n\})=\beta(a_1,a^{-1}_1a_N)\prod_{n=1}^{N-1}\beta(a_n,a^{-1}_na_{n+1})^{-1}.
\end{align}

To check $\ket{\Phi}$ is a ground state, the action of $A_v^k$ in the local basis is
\begin{align}\label{eq:AvkLocalBase}
&A^k_v
\ExpGSbase
\nonumber\\
=
&\frac{\alpha(a_1,a_1^{-1}a_2k^{-1},k)\beta(a_1^{-1}a_2k^{-1})}
{\alpha(a_2k^{-1},k,a_2^{-1}a_3)\beta(k,a_2^{-1}a_3)}
\ExpGSbaseAA
\end{align}
Hence acting $A_v^k$ on $\ket{\Phi}$ in the local basis yeilds
\begin{align}\label{eq:AvkLocalBaseAA}
&A^k_v\left(\beta(a_1,a_1^{-1}a_2)\beta(a_2,a_2^{-1}a_3)\ExpGSbase\right)
\nonumber\\
=
&\beta(a_1,a_1^{-1}ka_2)\beta(ka_2,a_2^{-1}k^{-1}a_3)\ExpGSbaseAA
\end{align}
from which one checks that $\ket{\Phi}$ is a ground state.

The topological feature on a boundary can be described in terms of unitary 1+1D boundary Pachner moves, which expand or shrink the boundary by one boundary edge. See Fig. \ref{fig:BdryPachnerMove}.  The unitary representation of these moves can be constructed in terms of $\beta$, in a way similar to that in Ref\cite{Hu2017}, which defines generic transformations in the language of Frobenius algebra. The ground-state Hilbert space is invariant under these boundary Pachner moves.
\begin{figure}
        \centering
        \BdryPachnerMove
        \caption{Boundary Pachner Moves mutating between two boundary edges and one boundary edges. }
        \label{fig:BdryPachnerMove}
\end{figure}

\section{Examples: $G=D_3$ and $D_4$}

For $D_m$ group, we denote the group elements by $(A,a)$, with $A=0,1$ and $a=0,1,\dots,m-1$. The multiplication is given by
\begin{equation}\label{eq:Dmmultiplication}
(A,a)(B,b)=[A+B]_2[(-1)^{B}a+b]_m
\end{equation}
where $[x]_y:=x\text{ mod }y$.

All 3-cocycles are given by
\begin{widetext}
\begin{align}\label{eq:Dm3cocycles}
\alpha^p\left\{(A,a),(B,b),(C,c)\right\}
=\exp\left\{
\frac{2 i p\pi}{m^2}
\left[
(-1)^{B+C}a 
\biggl(
(-1)^C b + c
-[(-1)^C b+c]_m
\biggr)
        +\frac{m^2}{2} A B C\right]
\right\}
\end{align}
\end{widetext}
where $p=0,1,\dots,2m-1$.

For $D_3$ ($D_4$) group, the number of 2-cochain solutions to the Frobenius condition \eqref{eq:FrobeniusCond} for each 3-cocycle  $\alpha^p$ and each subgroup $K$ is listed in Tab. \ref{tab:D3sols} (Tab. \ref{tab:D4sols}). Since all sub groups of $D_3$ are cyclic, there is at most one solution  for any $\alpha^p$ and $K$. A solution exists if and only if $\alpha^p|_K$ is trivial. And all such solutions are $\beta=1$, using Eq. \eqref{eq:Dm3cocycles}.

\begin{table}
        \caption{Number of 2-cochain solutions for $D_3$}
        \label{tab:D3sols}
        \begin{tabular}{|c|c|c|c|c|c|c|}
                \hline
                $K$ & $\alpha^0$ & $\alpha$ & $\alpha^2$ & $\alpha^3$ & $\alpha^4$ & $\alpha^5$
                \\
                \hline
                $\{00\}$ & 1 & 1 & 1 &  1 & 1 & 1
                \\
                \hline
                $\{00,10\}=Z_2$ & 1 & $\times$ & 1 & $\times$ & 1 & $\times$
                \\
                \hline
                $\{00,11\}=Z_2$ & 1 & $\times$ & 1 & $\times$ & 1 & $\times$
                \\
                \hline
                $\{00,12\}=Z_2$ & 1 & $\times$ & 1 & $\times$ & 1 & $\times$
                \\
                \hline
                $\{00,01,02\}=Z_3$ & 1 & $\times$ &  $\times$ & 1 & $\times$ & $\times$
                \\
                \hline
                $D_3$ & $1$ & $\times$ &  $\times$ & $\times$ & $\times$ & $\times$
                \\
                \hline          
        \end{tabular}
\end{table}

\begin{table}
        \caption{Number of 2-cochain solutions for $D_4$}
        \label{tab:D4sols}
        \begin{tabular}{|c|c|c|c|c|c|c|c|c|}
                \hline
                $K$ & $\alpha^0$ & $\alpha$ & $\alpha^2$ & $\alpha^3$ & $\alpha^4$ & $\alpha^5$ & $\alpha^6$& $\alpha^7$
                \\
                \hline
                $\{00\}$ & 1 & 1 & 1 &  1 & 1 & 1 & 1  & 1
                \\
                \hline
                $\{00,02\}=Z_2$ & 1 & $\times$ & 1 & $\times$ & 1 & $\times$ & 1 & $\times$
                \\
                \hline
                $\{00,10\}=Z_2$ & 1 & $\times$ & 1 & $\times$ & 1 & $\times$ & 1 & $\times$
                \\
                \hline
                $\{00,11\}=Z_2$ & 1 & $\times$ & 1 & $\times$ & 1 & $\times$ & 1 & $\times$
                \\
                \hline
                $\{00,12\}=Z_2$ & 1 & $\times$ & 1 & $\times$ & 1 & $\times$ & 1 & $\times$
                \\
                \hline
                $\{00,13\}=Z_2$ & 1 & $\times$ & 1 & $\times$ & 1 & $\times$ & 1 & $\times$
                \\
                \hline  
                $\{00,01,02,03\}=Z_4$ & 1 & $\times$ & $\times$ & $\times$ & 1 & $\times$ & $\times$ & $\times$
                \\
                \hline
                $\{00,02,10,12\}=Z_2\times Z_2$ & $Z_2$ & $\times$ & $Z_2$ & $\times$ & $Z_2$ & $\times$ & $Z_2$ & $\times$
                \\
                \hline
                $\{00,02,11,13\}=Z_2\times Z_2$ & $Z_2$ & $\times$ & $Z_2$ & $\times$ & $Z_2$ & $\times$ & $Z_2$ & $\times$
                \\
                \hline
                $D_4$ & $Z_2$ & $\times$ & $\times$ & $\times$ & $\times$ & $\times$ & $\times$ & $\times$
                \\
                \hline
        \end{tabular}
\end{table}

\section*{Acknowledgements}
We thank Yong-shi Wu and Ling-Yan Hung for helpful discussions. YDW is also supported by the Shanghai Pujiang program.

\begin{appendix}
\section{Review of $H^n(G,U(1))$}\label{app:HnGU1}
We give a brief review of the cohomology groups $H^n[G,U(1)]$ of finite groups $G$.

We first define the $n$-th \textit{cochain group} $C^n[G,U(1)]$ of $G$, which is an Abelian group of $n$-\textit{cochains}. The group elements of $C^n[G,U(1)]$ are functions $c(g_1,\dots,g_n): G^{\times n}\to U(1)$, where $g_i\in G$. The group multiplication reads $c(g_1,\dots,g_n)c'(g_1,\dots,g_n)=(cc')(g_1,\dots,g_n)$. 
There \textit{is a coboundary operator} $\delta$  that maps $C^n$ to $C^{n+1}$, namely,
\begin{align*}
\delta :C^n\to C^{n+1}: c(g_1,\dots,g_n)\mapsto \delta c(g_0,g_1\dots,g_n),
\end{align*}
where
\begin{align*}
\delta c(g_0,g_1\dots,g_n)= \prod_{i=0}^{n+1}c(\dots,g_{i-2},g_{i-1}g_i,g_{i+1},\dots)^{(-1)^i}.
\end{align*}
At $i=0$, the series of variables starts at $g_0$, and at $i=n+1$, the series of variables ends at $g_{n-1}$. The coboundary operator $\delta$ is nilpotent: $\delta^2c=1$, which results in an exact sequence:
\be\label{eq:exactSeq}
\cdots C^{n-1}\stackrel{\delta}{ \to} C^n\stackrel{\delta}{ \to} C^{n+1}\cdots.
\ee 
The images of the coboundary operator, $\mathrm{im}(\delta:C^{n-1}\to C^n)$, form the $n$-th coboundary group, where the $n$-cochains are called $n$-\textit{coboundaries}. The kernel $\ker(\delta:C^n\to C^{n+1})$ forms the group of $n$-\textit{\textbf{cocycles}}, which are the $n$-cochains meeting the \textit{$n$-cocycle condition} $\delta c=1$. The exact sequence \eqref{eq:exactSeq} leads to the definition of the $n$-th cohomology group:
\[
H^n[G,U(1)]:=\frac{\ker(\delta:C^n\to C^{n+1})}{\mathrm{im}(\delta:C^{n-1}\to C^n)}.
\]
The group $H^n[G,U(1)]$ is clearly Abelian and consists of the equivalence classes of the $n$-cocyles that differ from each other by merely an $n$-coboundary.
A trivial $n$-cocycle is one that can be written as a $n$-coboundary. One can define \emph{a slant product} that maps an $n$-cocycle $c$ to an $(n-1)$-cocycle $c_g$:
\be\label{eq:slantProd}
\begin{aligned}
&c_g(g_1,g_2,\dots,g_{n-1})\\
=&c(g,g_1,g_2,\dots,g_{n-1})^{(-1)^{n-1}}\\
&\times \prod^{n-1}_{j=1} c\left[g_1,\dots,g_j,(g_1\cdots g_j)^{-1}\right. \\
& \left. g(g_1\cdots g_j),\dots,g_{n-1}\right]^{(-1)^{n-1+j}}.
\end{aligned}
\ee

The twisted $2$-cocycles defined above Eq. \eqref{eq:GSDcylSimple} are examples of the slant product above.
\end{appendix}

\bibliographystyle{apsrev}
\bibliography{StringNet}

\end{document}